\theoremstyle{plain}
\newtheorem{theorem}{Theorem}[section]
\newtheorem{lemma}[theorem]{Lemma}
\newtheorem{corollary}[theorem]{Corollary}
\newtheorem{remark}[theorem]{Remark}
\theoremstyle{definition}
\numberwithin{equation}{section}
\def\al{\begin{align}}
\def\eal{\end{align}}
\def\be{\begin{equation}}
\def\ee{\end{equation}}
\begin{document}

\title[The Null-Timelike Problems of Linear Wave Equations]
{The Null-Timelike Boundary Problems of Linear Wave Equations in  Asymptotically Anti-de Sitter Space}
\author[X. Wu]{Xiao-ning Wu}
\address{Institute of Mathematics, Chinese Academy of Sciences\\
Beijing, 100190, China}
\email{wuxn@amss.ac.cn}
\author[L. Zhang]{Lin Zhang}
\address{Institute of Mathematics, Chinese Academy of Sciences\\
Beijing, 100190, China}
\email{linzhang@amss.ac.cn}

\begin{abstract}
In this paper, we study the linear wave equations in an asymptotically anti-de Sitter spacetime. We will consider the mixed boundary problem, where the initial data are given on an outgoing null hypersurface and a timelike hypersurface, and the asymptotic information is given on conformal infinity.
\end{abstract}

\maketitle

\section{Introduction}
Initial value problem is one of central problems of mathematical physics equations. Since the requirements of different physical problems, many different kinds of initial problems have been studied. For example, characteristic initial value problems\cite{CIV}, initial-boundary value problems\cite{IBVP}. In this paper, null-timelike initial-boundary value problem will be studied. This problem was first noticed by R. Bartnik\cite{Bartnik97}. He tried to solve Einstein equations by imposing boundary data on a timleke cylinder and a outgoing light cone which started from a space-like two sphare. One years later, Balean solved a simple model of such problem, i.e. null-timelike boundary problem for linear wave equations in Minkowski space\cite{Balean1997}. In following paper, Balean and Bartnik generalize Balean's result to Maxwell field in Minkowski space. Since their method strongly depends on Minkowski back ground, it is nontrivial to generalize their work to general back ground. This is one motivation of this work. Another motivation of this paper is try to understand AdS/CFT correspondence\cite{Ma98} in terms of initial-boundary value problem. It is well-known that AdS/CFT correspondence is a great breakthrough in theoretical physics. At the very beginning of AdS/CFT correspondence, Witten\cite{Wit98} understood it in terms of boundary problem of elliptic partial differential equations(PDE) since he considered the Euclidean version of this conjecture. In a recent work, Witten reconsider this problem and give some more careful conjecture on how to choose a suitable boundary condition\cite{Wit18}. In past decades, more and more interest has been focus on how to use this conjecture solving practical physical problems. One of the most active idea is to study condensed matter physics by using AdS/CFT correspondence, for example, AdS/QCD theory\cite{ADSQCD}, AdS/condense matter theory\cite{3H08,ADSCMT}, fluid/gravity correspondence\cite{Strominger}, $\cdots$. The main idea of these work is considering dynamical process on asymptotic AdS black hole back ground and analog associated physical properties by using AdS/CFT correspondence. Since the perturbation is imposed from time-like boundary or conformal boundary of asymptotic AdS spacetime\cite{Strominger}, the null-timelike boundary value problem will give a mathematical foundation for those work.

This paper is organized as following : in next section, some back ground knowledge and statement of main result is given. In section III, some necessary notations and conformal transformations are introduced, which will be used in the proof. Section IV is the key step of this paper. Energy-estimates is given in this section. These results are key tools for our proof. With theorems of section IV, we finish the proof of our main theorem in section V. Section VI contains some discussions on our results.

\section{Back ground and statement of main theorem}
Let $(M,g)$ is a Lorentzian manifold, where $g$ is an asymptotically anti-de Sitter metric. In the Bondi-Sachs coordinates $\{\tau, r, x_2,x_3\}$, where $\tau=constant$ are outgoing null hypersurface,  $\{x_2,x_3\}$ forming local coordinates on $S_{\tau,r}$( i.e. the 2-dimension surfaces with $\tau=constant$ and $r=constant$), and $4\pi r^2=Area(S_{\tau,r})$. We suppose that  $g$ is given by
\be
g=-Ve^{2\eta}\text{d}\tau^2
-2e^{2\eta}\text{d}\tau\text{d}r+r^2h_{AB}(\text{d}x^A-U^A\text{d}\tau)
(\text{d}x^B-U^B\text{d}\tau),
\ee
where, $V$, $\eta$, $U^A$, $h_{AB}$ are functions of $\tau, r, x^A$ with $A=2,3$, and $\det (h_{AB})=1$. And, furthermore, as $r\to +\infty$,
\be\label{asym-V}
V=1-\frac{\Lambda}{3}e^{\zeta}r^2+O(\frac{1}{r}),\ \ \Lambda<0,
\ee
and
\be\label{asym-U}
\lim_{r\to \infty}rU^A=\lim_{r\to \infty}\zeta=\lim_{r\to \infty}\eta=0,
\ee
and
\be
h_{AB}\text{d}x^A\text{d}x^B\to g_{S^2},
\ee
uniformly on $[0,T]\times S^2$, where $g_{S^2}$ is the standard round metric on $S^2$.

In this paper, we study the linear wave equations
\be\label{wave-1}
\Box_g u=0,
\ee
with boundary condition
\begin{align}\label{wave-2}\begin{split}
&u|_{\Sigma_0=\{\tau=0\}}=\frac{\varphi e^{-\eta}}{r},\\
&u|_{\mathcal{T}=\{r=R\}}=\frac{\psi_1 e^{-\eta}}{R},
\end{split}\end{align}
for some given functions $\varphi$ on $\Sigma_0$ and $\psi_1$ on $\mathcal{T}$. Furthermore, we need asymptotic condition on $u$ at further null infinity $\mathcal{I}$ (with topology $[0,+\infty)\times S^2$),
\be\label{wave-3}
\lim_{r\to \infty}e^{\eta}ru=\psi_2,
\ee
for the given functions $\psi_2$ on $[0,+\infty)\times S^2$.
\begin{figure}
\begin{center}
\begin{tikzpicture}
\draw[thin]
(0,3.5)--(0,0);
\node at (-0.25,1.3) {$\mathcal{T}$};
\draw[thin]
(0,0)--(4,4);
\node at (2.4,2) {$\Sigma_0$};
\draw[dashed]
(4,4)--(4,6.5);
\node at (4.2,5.6) {$\mathcal{I}$};
\draw[thin]
(0,2)--(4,6);
\node at (0.95,3.3) {$\Sigma_T$};
\draw[thin]
(4,4) sin (2.1,5.5);
\node at (2.8,5.5) {$\mathcal{H}$};
\node at (2,2.9) {$\mathcal{M}_1$};
\node at (3.6,5.0) {$\mathcal{M}_2$};
\end{tikzpicture}
\end{center}
\caption{}
\label{fig:1}
\end{figure}

We will prove the following result,
\begin{theorem}
Suppose $g$ sufficiently regular and with conformal regular condition (see later). Assume $\psi_1\in H^{2k}([0,T]\times S^2)$, $\psi_2\in H^k([0,T]\times S^2)$, and
$$\|\varphi\|_{\widetilde H^{2k}((R, \infty)\times S^2)}=
\sum_{|\alpha|\leq 2k,\alpha=(\alpha_1,\alpha_2,\alpha_3)}\big(\int_R^{\infty}\int_{{S}^{2}}
\frac{|\partial^{\alpha}\varphi|^{2}}{r^{2+4\alpha_1}}drd\Sigma\big)^{\frac{1}{2}}<+\infty.$$
Then, there exists a unique solution $u$, which satisfies \eqref{wave-1}-\eqref{wave-2}-\eqref{wave-3} in $[0,T]\times[R,+\infty)\times S^2$,
and
\begin{align}\begin{split}
&\|e^{\eta}ru\|_{\widetilde H^{k}([0,T]\times(R, \infty)\times S^2)}=
\sum_{|\alpha|\leq k,\alpha=(\alpha_0,\alpha_1,\alpha_2,\alpha_3)}\big(\int_0^T\int_R^{\infty}\int_{{S}^{2}}
\frac{|\partial^{\alpha}(e^\eta ru)|^{2}}{r^{2+4\alpha_1}}d\Sigma dr d\tau\big)^{\frac{1}{2}}\\
\leq& C\{\|\varphi\|_{\widetilde H^{2k}((R, \infty)\times S^2)}+\|\psi_1\|_{H^{2k}([0,T]\times S^2)}+\|\psi_2\|_{H^{k}([0,T]\times S^2)}\},
\end{split}\end{align}
where $C$ is a constant depending on $g$ and $k$.
\end{theorem}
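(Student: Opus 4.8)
The plan has three stages: (i) conformally compactify so that the conformal infinity $\mathcal I$ becomes an ordinary timelike boundary at finite location and \eqref{wave-1} turns into a wave equation with a (possibly singular) zeroth‑order term for $v:=e^{\eta}ru$; (ii) prove the quantitative estimate by the vector‑field (multiplier) method adapted to the characteristic foliation $\{\Sigma_\tau\}$; (iii) deduce existence and uniqueness from the resulting a priori estimate in the usual way. \emph{Reduction.} Put $\rho=1/r$, so the region becomes $Q_T=[0,T]\times[0,1/R]\times S^2$, and set $\tilde g=\Omega^2 g$ with $\Omega=e^{-\eta}\rho$. Hypotheses \eqref{asym-V}--\eqref{asym-U} are precisely what make $\tilde g$ extend smoothly up to $\{\rho=0\}$, with $\mathcal I=\{\rho=0\}$ and $\mathcal T=\{\rho=1/R\}$ timelike while the Bondi slices $\Sigma_\tau$ stay characteristic ($\tilde g^{\tau\tau}=g^{\tau\tau}=0$). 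Using the conformal transformation law for $\Box$ in dimension four, $v=\Omega^{-1}u=e^{\eta}ru$ solves
\[
\Box_{\tilde g}v+b^{\mu}\,\partial_{\mu}v+c\,v=0 \ \text{ on }\ Q_T,\qquad v|_{\Sigma_0}=\varphi,\quad v|_{\mathcal T}=\psi_1,\quad v|_{\mathcal I}=\psi_2,
\]
where $c$ carries the term $\tfrac16\Omega^{-2}R_g$, a priori singular at $\{\rho=0\}$; the \emph{conformal regular condition} is the hypothesis controlling this term (and the coefficients of $\tilde g$ and $b$) near $\mathcal I$. The weights $r^{-(2+4\alpha_1)}$ in the norms encode the substitutions $\partial_r\sim\rho^2\partial_\rho$, $dr\sim\rho^{-2}d\rho$ together with the expected $1/r$‑expansion of $v$ at $\mathcal I$; I will in fact establish the estimate first in a conformally natural (stronger) energy norm and then read off the stated weighted bound.

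\emph{The basic energy estimate --- the core (Section IV).} Since $\Sigma_\tau$ is characteristic, I foliate $Q_T$ by these null slices and contract the energy--momentum tensor $T_{\mu\nu}[v]$ of the equation with the multiplier $X=\partial_\tau$, which is timelike throughout (because $V>0$) and tangent to both $\mathcal T$ and $\mathcal I$. The divergence theorem on $Q_T\cap\{\tau\le s\}$ yields an identity of the schematic form
\[
E(s)+\!\!\int_{\mathcal T\cap\{\tau\le s\}}\!\!\!(\cdots)+\!\!\int_{\mathcal I\cap\{\tau\le s\}}\!\!\!(\cdots)=E(0)+\!\!\int_0^{s}\!\!\int_{\Sigma_\tau}(\text{error}),
\]
where $E(\tau)=\int_{\Sigma_\tau}\bigl(\tilde g^{\rho\rho}|\partial_\rho v|^2+|\nabla_{S^2}v|^2+|v|^2\bigr)$ is a \emph{degenerate} (characteristic) energy: the null flux through $\Sigma_\tau$ controls only the derivatives tangent to $\Sigma_\tau$, the $\partial_\tau v\,\partial_\rho v$ cross terms cancelling identically. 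The error is $O(E(\tau))$ by the tameness of the coefficients; the $\mathcal T$‑flux involves only $\psi_1$ (tangential derivatives) and a controlled normal derivative because $X$ is tangent to $\mathcal T$; the $\mathcal I$‑flux involves only $\psi_2$ once the reflecting (Dirichlet) boundary condition there is used --- this is where the sign of $c$ near $\{\rho=0\}$ and the conformal regular condition are decisive. Gr\"onwall then gives $\sup_\tau E(\tau)\le C(E(0)+\|\psi_1\|_{H^1}^2+\|\psi_2\|_{H^1}^2)$ with $E(0)\le C\|\varphi\|_{\widetilde H^1}^2$. The transverse derivative $\partial_\tau v$, which $E$ does not see, is recovered from the equation itself: $\partial_\rho(\partial_\tau v)=(\text{bounded})\cdot(\partial_\rho^2 v,\ \Delta_{S^2}v,\ \partial v,\ v)$, and integrating in $\rho$ from $\mathcal T$ (where $\partial_\tau v=\partial_\tau\psi_1$) expresses $\partial_\tau v$ through already‑estimated quantities at the cost of \emph{one extra derivative}. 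Iterating this loss is the source of the $2k\to k$ gap between the data norms ($\widetilde H^{2k},H^{2k}$) and the solution norm ($\widetilde H^{k}$), while $\psi_2$ survives with only $H^k$ because it enters solely through the $\mathcal I$‑flux and never through the recovery formula.

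\emph{Higher order, existence, uniqueness.} For the order‑$k$ estimate, commute the equation with $\partial_\tau$ and with the rotation fields of $S^2$ (which respect the structure), and, with more care, with $\partial_\rho$; at each stage trade a $\partial_\tau$ for the two derivatives it costs and absorb commutators and lower‑order terms inductively, obtaining $\|v\|_{\widetilde H^k}\lesssim\|\varphi\|_{\widetilde H^{2k}}+\|\psi_1\|_{H^{2k}}+\|\psi_2\|_{H^{k}}$, provided the corresponding corner compatibility conditions at $\Sigma_0\cap\mathcal T$ and $\Sigma_0\cap\mathcal I$ hold. It is convenient to split $Q_T$ as in Figure~\ref{fig:1}: on $\mathcal M_1$ --- below the incoming null surface $\mathcal H$ issuing from the corner $\Sigma_0\cap\mathcal I$ --- the geometry is uniformly non‑degenerate and the argument is essentially the Minkowski null--timelike estimate of Balean and Bartnik, while on the small neighbourhood $\mathcal M_2$ of $\mathcal I$ the conformal estimate above is used, with $\mathcal H$ carrying data matched from the $\mathcal M_1$‑solution. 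Uniqueness is then immediate from the $k=0$ estimate applied to the difference of two solutions with vanishing data. For existence, truncate at $\rho=\varepsilon$, impose on the new finite timelike boundary a reflecting condition adapted to $\psi_2$, solve the resulting regular null--timelike initial--boundary value problem by standard hyperbolic theory, observe that the Section~IV estimates hold \emph{uniformly in $\varepsilon$}, and pass to a weak limit as $\varepsilon\to0$; the limit satisfies \eqref{wave-1}--\eqref{wave-3} and obeys the stated bound (a duality/Galerkin argument built on the same a priori estimate is an alternative).

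The principal obstacle will be the analysis near $\mathcal I$: making the conformally reduced equation genuinely usable there --- the term $\Omega^{-2}R_g\,v$ is singular, and only the conformal regular condition, together with the correct reflecting boundary condition (the one singled out by the AdS/CFT dictionary), renders the boundary flux on $\mathcal I$ controllable --- while at the same time handling the flux degeneracy on the characteristic slices $\Sigma_\tau$ and the corner $\Sigma_0\cap\mathcal I$ where null data, asymptotic data and the conformal boundary all meet.
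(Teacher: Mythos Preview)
Your outline captures the right architecture---conformal compactification, the $\mathcal{M}_1/\mathcal{M}_2$ split along the incoming null cone $\mathcal{H}$ from $\Sigma_0\cap\mathcal{I}$, energy estimates via multipliers, and the $2k\to k$ loss from recovering transversal derivatives---but several of the working parts differ from the paper, and one of your concerns is misplaced.

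\textbf{The zeroth-order term is not singular.} After the conformal change $v=\Xi^{-1}u$ with $\Xi=ze^{-\eta}$, the paper computes the resulting potential $\omega$ explicitly and finds it \emph{bounded} up to $z=0$: the apparently dangerous $1/z$ pieces cancel thanks to the asymptotics \eqref{asym-U} (in particular $\lim rU^A=0$), and the conformal $C^k$ condition is simply that $\hat g$ extends $C^k$ to $\{z=0\}$. There is no delicate sign analysis of a nearly-singular $c$ at $\mathcal{I}$; once $\omega\in L^\infty$, the boundary $\mathcal{I}$ is an ordinary timelike hypersurface and is handled just as $\mathcal{T}$ was (with a different choice of the parameter $m$; see below). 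Your identification of this as ``the principal obstacle'' would send you chasing a difficulty that is not there.

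\textbf{The multiplier is different, and this matters for the $\mathcal{T}$-flux.} You propose $X=\partial_\tau$ plus Gr\"onwall, accepting that the characteristic flux $E(\tau)$ sees only tangential derivatives. The paper instead takes $Y=mN_1+N_2$ together with an exponential weight $h=e^{-lq\tau+qz}$. The point is that $\nabla h$ is then a \emph{timelike} combination of $N_1,N_2$, so the bulk term $Q[v](\nabla h,Y)$ is coercive in \emph{all} first derivatives and absorbs the deformation-tensor and lower-order errors directly for $q$ large---no Gr\"onwall, and no separate recovery of $\partial_\tau v$ at the $H^1$ level. The free parameter $m$ is what makes the flux on the timelike boundaries tractable: one chooses $m$ large on $\mathcal{T}$ (and $m=-\Lambda/24$ on $\mathcal{I}$) so that the $(\partial_z v)^2$ coefficient in $Q[v](\nabla z,Y)$ has a favourable sign, placing the unknown normal derivative on the left. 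With $X=\partial_\tau$ alone you have no such parameter, and the cross term $\partial_n v\cdot\partial_\tau\psi_1$ on $\mathcal{T}$ is not controllable from a degenerate null energy; this is a genuine gap in your $H^1$ step.

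\textbf{The loss of regularity.} You locate the $2k\to k$ loss correctly, but in the paper its source is specifically the determination of $\partial_\tau^{\,l} v|_{\Sigma_0}$: restricting the equation to $\Sigma_0$ gives a transport ODE in $z$ for $\partial_\tau v$ (with initial value $\partial_\tau\psi_1$ at $\Sigma_0\cap\mathcal{T}$), and each iteration raises the tangential order by one. The $\psi_2$ data enter only at order $k$ because in the $\mathcal{M}_2$ estimate the ``initial'' null surface is $\mathcal{H}$, whose values are already controlled from $\mathcal{M}_1$.

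\textbf{Existence.} The paper does not truncate at $\rho=\varepsilon$ and pass to a limit. It proves local existence near $\Sigma_0\cap\mathcal{T}$ by a Cauchy--Kowalevski/majorant argument for analytic data (\`a la Duff), approximates general data and coefficients by analytic ones using the a priori $H^k$ bound, and then continues globally via a spacelike foliation $S_t$, invoking at each step one of: standard Cauchy theory in $\mathcal{D}^+(S_t)$, the local null--timelike existence just obtained, or Rendall's characteristic initial value theorem. Your truncation scheme could plausibly be made to work, but the uniform-in-$\varepsilon$ estimates it needs rest on the multiplier construction above, not on the $\partial_\tau$ argument you sketched.
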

\section{Conformal transformation}
 In the new coordinates
$$(\tau, z=\frac{1}{r}, x_2, x_3),$$
we have
\be
g=-Ve^{2\eta}d\tau^{2}+\frac{2}{z^{2}}e^{2\eta}d\tau dz
+\frac{1}{z^{2}}h_{AB}(dx^{A}-U^{A}d\tau)(dx^{B}-U^{B}d\tau).
\ee
We introduce the conformal metric $\hat{g}$ given by
\begin{equation}\label{eq-metric-g-bar}
\hat{g}=\Xi^{2}{g}=-z^2Vd\tau^{2}+2d\tau dz
+h_{AB}e^{-2\eta}(dx^{A}-U^{A}d\tau)(dx^{B}-U^{B}d\tau),
\end{equation}
where $\Xi=ze^{-\eta}$.
\begin{remark} We say $g$ satisfies the conformal $C^k$ condition, which means the conformal metric $\hat{g}$ can be $C^k$ extended to $z=0$.
\end{remark}
\begin{remark}
By \eqref{asym-V}, we have, $z^2V=-\frac{\Lambda}{3}$ and $\hat{g}(\nabla z,\nabla z)=-\frac{\Lambda}{3}>0$ at $z=0$, hence, the hypersurface $\{z=0\}$ is timelike in $(M,\hat{g})$.
\end{remark}

Set $v=\Xi^{-1}u$, then
\begin{align}\begin{split}
\Box_{g}u&=\frac{1}{(-\det {g})^{{1}/{2}}}\sum\limits_{i=0}^{3}\partial_{i}[(-\det {g})^{{1}/{2}}{g}^{ij}\partial_ju]\\
&=\Xi^3\Box_{\hat{g}}v+[\Xi^2\Box_{\hat{g}}\Xi
-2\Xi\hat{g}^{ij}\partial_i\Xi\partial_j\Xi]v.
\end{split}\end{align}
By $\partial_i\Xi=\delta_{i1}e^{-\eta}-\Xi\partial_i\eta$, then
$$\hat{g}^{ij}\partial_i\Xi\partial_j\Xi=\Xi^2V-2\Xi e^{-\eta}\hat{g}^{1j}\partial_j\eta+\Xi^2\hat{g}^{ij}\partial_i\eta\partial_j\eta,$$
and
$$\Box_{\hat{g}}\Xi=\frac{1}{\kappa}\partial_i\kappa \hat{g}^{1i}e^{-\eta}
+\partial_i\hat{g}^{1i}e^{-\eta}-2e^{-\eta}\hat{g}^{1i}\partial_i\eta
-\Xi(\Box_{\hat{g}}\eta+\hat{g}^{ij}\partial_i\eta\partial_j\eta),$$
where $\kappa=e^{-2\eta}\sqrt{\gamma}, \gamma=\det(h_{AB}).$  Hence,
\begin{align}\begin{split}
&\Xi^2\Box_{\hat{g}}\Xi
-2\Xi\hat{g}^{ij}\partial_i\Xi\partial_j\Xi=\Xi^3\omega\\
=&\Xi^3[\frac{1}{2z}\partial_A(\log\gamma) U^A+\frac{1}{z}\partial_A(U^A)-\Box_{\hat{g}}\eta-3\hat{g}^{ij}\partial_i\eta\partial_j\eta],
\end{split}\end{align}
where, we have used that $\gamma$ is independent on $\tau$ and $r$. By \eqref{asym-U}, in particular $\lim\limits_{r\to 0}rU^A=0$, then wave equations reduces to the following equation for $v$,
$$\Box_{\hat{g}}v+\omega v=0.$$

From now on, we work on the Lorentzian manifold $(\bar{M}, \hat{g})$ with boundary, which is un-physical spacetime, and  has the topology $[0,+\infty)\times[0,\frac{1}{R}]\times S^2$ and a global coordinate system
$\{\tau,z,x_2,x_3\}$ with $\tau\in [0,+\infty)$, $z\in[0,\frac{1}{R}]$ and $\{x_2,x_3\}$ forming local coordinates on $S^2$. For some fixed $T>0$, we set
$$\Omega_{T}=\{ (\tau, z)|0<\tau< T, 0<z< \frac{1}{R}\}\times S^{2},$$
and
\begin{align*}
\Sigma_{0}&=\{(0,z)|0\leq z\leq \frac{1}{R}\}\times S^{2},\\
\Sigma_{T}&=\{(T,z)|0\leq z\leq \frac{1}{R}\}\times S^{2},\\
\mathcal{T}&=\{(\tau, \frac{1}{R})|0\leq\tau\leq T\}\times S^{2},\\
\mathcal{I}&=\{(\tau,0)|0\leq\tau< +\infty\}\times S^2,
\end{align*}
where, $\mathcal{I}$ represent the further null infinity of the physical spacetime $(M,g)$, and is timelike.

We introduce two null vector fields
$$N_1=\nabla\tau=\partial_{z},\quad
N_{2}=-(\nabla z-\frac{1}{2}z^2V\nabla \tau)=-(\partial_{\tau}+\frac{1}{2}z^{2}V\partial_{z}+g^{1A}\partial_{A}),$$
and, it is easy to check
$$\hat{g}(N_1,N_1)=\hat{g}(N_2,N_2)=0, \ \ \hat{g}(N_1,N_2)=-1.$$
\begin{remark}
$N_1,N_2$ are further point null vector fields.
\end{remark}
We introduce the ingoing null cone $H_\mu$, which is generated by null geodesics $\gamma(t)$ starting  from $p$ on  2-sphere $S_{0,\mu}$, with $\gamma(0)=p$ and $\dot{\gamma}(0)=N_2(p)$. Furthermore, we definite following domain
$$H_{(\nu,\frac{1}{R})}=\bigcup_{\mu\in(\nu,\frac{1}{R})}H_{\mu},$$
and
$$\Omega_{T,\nu}=\Omega_{T}\cap H_{(\nu,\frac{1}{R})}.$$
\begin{remark}
For fixed $T$, we take $\nu$ small enough, then, the boundary of $\Omega_{T,\nu}$ is composed by four regular hypersufaces,  and which located on $\Sigma_0$, $\Sigma_T$, $\mathcal{T}$, and $H_{\nu}$.
\end{remark}
\begin{remark}
For convenience, we denote $H_{0}$ by $\mathcal{H}$, and denote $\mathcal{M}_1=H_{(0,\frac{1}{R})}$. In fact, $\mathcal{M}_1$ is the maximal determined space in physical spacetime by $\Sigma_0$ and $\mathcal{T}$.
\end{remark}
\begin{remark}
We denote $\mathcal{M}_2$ for the domain bounded by $\mathcal{H}$ and $\mathcal{I}$.
\end{remark}

\section{Energy estimate}
In this section, we work on the un-physical spacetime $(M,\hat{g})$, for convenience, we also denote $\hat{g}$ by $g$. We recall the wave equations
\be\label{wave-equ}
\Box_{g} v+\omega v=0,
\ee
and, we consider the boundary condition
\begin{align}\label{wave-boundary}\begin{split}
v|_{\Sigma_0}&=\varphi,\\
v|_{\mathcal{T}}&=\psi_1,\\
v|_{\mathcal{I}}&=\psi_2.
\end{split}\end{align}
\begin{remark}
We can give a rough view for the solvable of this mixed boundary problem. First, by $\varphi$ and $\psi_1$, we can solve the wave equations in $\mathcal{M}_1$, particularly, we can get the information of $v$ on $\mathcal{H}$, and then, combine $\psi_2$, we can solve v in $\mathcal{M}_2$.
\end{remark}

For a $C^1$-function $\phi$, the {\it associated energy momentum tensor} $Q[\phi]$ is a symmetric 2-tensor
defined by, for any vector fields $X$ and $Y$,
\begin{equation*}
Q[\phi](X,Y)=(X\phi)(Y\phi)-\frac{1}{2}g(X,Y)|\nabla\phi|^{2}-g(X,Y)\phi^{2}.
\end{equation*}

\begin{lemma}\label{lem}
Let $\phi$ be a given $C^{2}$-function and $Q[\phi]$ be the associated energy momentum tensor.
Set $X=a_{1}N_{1}+a_{2}N_{2}$ and $Y=a_{3}N_{1}+a_{4}N_{2}$, for some $a_i$, $i=1,2,3,4$. Then,
\begin{align*}
Q[\phi](X, Y)= a_{1}a_{3}(\partial_{z}\phi)^{2}+a_{2}a_{4}(N_{2}\phi)^{2}
+(a_{1}a_{4}+a_{2}a_{3})(\frac{1}{2}g^{AB}\partial_{A}\phi\partial_{B}\phi+\phi^{2}).
\end{align*}
Moreover, if $a_{i}\geq0$, $i=1,2,3,4$, then
$Q[\phi](X,Y)\geq0$.
\end{lemma}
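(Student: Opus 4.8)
The statement is a pointwise algebraic identity, so the plan is a direct computation in the null frame $\{N_1,N_2,\partial_2,\partial_3\}$ together with the bilinearity and symmetry of $Q[\phi]$. First I would record the three frame values $Q[\phi](N_i,N_j)$ for $i,j\in\{1,2\}$. Since $g(N_1,N_1)=g(N_2,N_2)=0$, the definition of $Q$ makes the $|\nabla\phi|^2$ and $\phi^2$ terms drop out of the diagonal entries, leaving
\[
Q[\phi](N_1,N_1)=(N_1\phi)^2=(\partial_z\phi)^2,\qquad Q[\phi](N_2,N_2)=(N_2\phi)^2 .
\]
For the off-diagonal entry, $g(N_1,N_2)=-1$ gives $Q[\phi](N_1,N_2)=(N_1\phi)(N_2\phi)+\tfrac12|\nabla\phi|^2+\phi^2$.

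The only nontrivial point is to rewrite $|\nabla\phi|^2$ in this frame. I would observe that $\partial_2,\partial_3$ are $g$-orthogonal to both $N_1$ and $N_2$: for $N_1=\partial_z$ this is $g_{zA}=0$, which holds because $\hat g$ has no $dz\,dx^A$ term, and for $N_2$ it follows from $g_{zA}=0$ together with the Bondi-type shift identity $g_{\tau A}+g^{1B}g_{BA}=0$. Writing $\nabla\phi=p^1N_1+p^2N_2+p^A\partial_A$ and pairing against the frame, one reads off $p^1=-N_2\phi$, $p^2=-N_1\phi$ and $p^A=g^{AB}\partial_B\phi$, so that
\[
|\nabla\phi|^2=-2(N_1\phi)(N_2\phi)+g^{AB}\partial_A\phi\,\partial_B\phi .
\]
Substituting this into the off-diagonal entry cancels the $(N_1\phi)(N_2\phi)$ term and yields $Q[\phi](N_1,N_2)=\tfrac12 g^{AB}\partial_A\phi\,\partial_B\phi+\phi^2$.

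With the three frame values in hand I would expand $Q[\phi](X,Y)$ for $X=a_1N_1+a_2N_2$, $Y=a_3N_1+a_4N_2$ using bilinearity and symmetry; collecting the coefficients of $a_1a_3$, $a_2a_4$ and $a_1a_4+a_2a_3$ reproduces the displayed formula exactly. For the sign statement, $(h_{AB})$ being a Riemannian metric on $S^2$ and $e^{-2\eta}>0$ make $g_{AB}$, hence $g^{AB}$, positive definite, so $g^{AB}\partial_A\phi\,\partial_B\phi\ge0$; together with $(\partial_z\phi)^2\ge0$, $(N_2\phi)^2\ge0$, $\phi^2\ge0$ and the nonnegativity of all three coefficients when each $a_i\ge0$, this forces $Q[\phi](X,Y)\ge0$. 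There is no genuine obstacle here; the only care needed is the bookkeeping of the shift terms inside $N_2$ and the normalization $g(N_1,N_2)=-1$, which is precisely what makes $\tfrac12|\nabla\phi|^2$ and the $(N_1\phi)(N_2\phi)$ cross-contribution collapse into the clean angular expression.
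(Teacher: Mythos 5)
Your proposal is correct and follows essentially the same route as the paper: both compute the three frame values $Q[\phi](N_1,N_1)$, $Q[\phi](N_2,N_2)$, $Q[\phi](N_1,N_2)$ and expand by bilinearity, the only cosmetic difference being that you derive $|\nabla\phi|^2=-2(N_1\phi)(N_2\phi)+g^{AB}\partial_A\phi\,\partial_B\phi$ from the frame decomposition while the paper writes out the equivalent coordinate expression $2\partial_\tau\phi\partial_z\phi+z^2V(\partial_z\phi)^2+2g^{1A}\partial_z\phi\partial_A\phi+g^{AB}\partial_A\phi\partial_B\phi$ directly. No gaps.
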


\begin{proof}
First, we have
\begin{align*}
Q[\phi](X, Y)= a_{1}a_{3}Q[\phi](N_{1}, N_{1})+a_{2}a_{4}Q[\phi](N_{2}, N_{2})+(a_{1}a_{4}+a_{2}a_{3})Q[\phi](N_{1}, N_{2}).
\end{align*}
Next,
\begin{equation*}
|\nabla\phi|^{2}=2\partial_{\tau}\phi\partial_{z}\phi+z^{2}V(\partial_{z}\phi)^{2}+
2g^{1A}\partial_{z}\phi\partial_{A}\phi+g^{AB}\partial_{A}\phi\partial_{B}\phi,
\end{equation*}
and then
\begin{align*}
&Q[\phi](N_{1}, N_{1})=(\partial_{z}\phi)^{2}, \
Q[\phi](N_{2}, N_{2})=(N_{2}\phi)^{2}, \\
&Q[\phi](N_{1}, N_{2})=\frac{1}{2}g^{AB}\partial_{A}\phi\partial_{B}\phi+\phi^{2}.
\end{align*}
A simple substitution yields the desired results.
\end{proof}

We take a tetrad vector fields
$$\{N_1,N_2,e_3,e_4\},$$
where, $e_3,e_4$ are vector fields tangent to $S_{\tau,z}$, and satisfy,
$$g(e_3,e_3)=g(e_4,e_4)=1,\ g(N_1,e_4)=g(N_1,e_3)=g(N_2,e_4)=g(N_2,e_3)=g(e_4,e_3)=0.$$
Take a null vector field
$$L=l_1N_1+l_2N_2+l_3e_3+l_4e_4,$$
by $g(L,L)=0$, we have
\be
-2l_1l_2+l_3^2+l_4^2=0.
\ee
\begin{lemma}
Let $L=l_1L_1+l_2N_2+l_3N_3+l_4e_4$ be a null vector with $l_1\geq0$, $l_2>0$, and $Y=b_1N_1+b_2N_2$ with $b_1\geq0, b_2\geq0$, then, we have $Q[\phi](L,Y)\geq0$.
\end{lemma}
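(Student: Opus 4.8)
The plan is to peel off the two elementary slots $N_1$ and $N_2$ from $Y$ and estimate $Q[\phi](L,N_1)$ and $Q[\phi](L,N_2)$ separately. Since $Q[\phi]$ is a symmetric $2$-tensor, hence bilinear in its two arguments, and $Y=b_1N_1+b_2N_2$ with $b_1,b_2\geq0$,
\[
Q[\phi](L,Y)=b_1Q[\phi](L,N_1)+b_2Q[\phi](L,N_2),
\]
so it is enough to show $Q[\phi](L,N_i)\geq0$ for $i=1,2$.

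For this, I would expand $L=l_1N_1+l_2N_2+l_3e_3+l_4e_4$ in the tetrad and use bilinearity together with the values already recorded in (the proof of) Lemma~\ref{lem}, namely $Q[\phi](N_1,N_1)=(\partial_z\phi)^2$, $Q[\phi](N_2,N_2)=(N_2\phi)^2$ and $Q[\phi](N_1,N_2)=\frac12 g^{AB}\partial_A\phi\partial_B\phi+\phi^2$, as well as the mixed terms $Q[\phi](e_A,N_i)=(e_A\phi)(N_i\phi)$, which are immediate from $g(e_A,N_i)=0$. Writing $p=\partial_z\phi$, $q=N_2\phi$, $s=e_3\phi$, $t=e_4\phi$ and noting $g^{AB}\partial_A\phi\partial_B\phi=s^2+t^2$, this gives
\[
Q[\phi](L,N_1)=l_1p^2+\frac{l_2}{2}(s^2+t^2)+l_3ps+l_4pt+l_2\phi^2,
\]
\[
Q[\phi](L,N_2)=l_2q^2+\frac{l_1}{2}(s^2+t^2)+l_3qs+l_4qt+l_1\phi^2.
\]

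The key algebraic input is the null relation for $L$, i.e. $l_3^2+l_4^2=2l_1l_2$. In the first identity, regard the right-hand side as a quadratic polynomial in $(s,t)$ with constant part $l_2\phi^2$; its quadratic part $\frac{l_2}{2}(s^2+t^2)$ is positive definite since $l_2>0$, and minimizing in $(s,t)$ and inserting $l_3^2+l_4^2=2l_1l_2$ shows the minimum of the displayed expression equals $l_2\phi^2\geq0$, hence $Q[\phi](L,N_1)\geq0$. For the second identity, complete the square in $q$ (again using $l_2>0$) and bound the leftover cross term by Cauchy--Schwarz, $(l_3s+l_4t)^2\leq(l_3^2+l_4^2)(s^2+t^2)=2l_1l_2(s^2+t^2)$; this exactly compensates the $\frac{l_1}{2}(s^2+t^2)$ term and leaves $l_1\phi^2\geq0$, so $Q[\phi](L,N_2)\geq0$. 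Together with the displayed reduction this proves the lemma.

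The one place that requires care is the coupling between the sphere-tangent components $l_3,l_4$ of $L$ and its null-pair part: the cross terms $l_3ps+l_4pt$ and $l_3qs+l_4qt$ have no definite sign on their own, and it is precisely the null constraint $l_3^2+l_4^2=2l_1l_2$, fed through a Cauchy--Schwarz bound, that renders them harmless. Note also that no separate treatment of $l_1=0$ is needed: by the null relation $l_1=0$ forces $l_3=l_4=0$, and in any case the completed-square estimates above use only $l_2>0$.
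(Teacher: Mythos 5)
Your proof is correct: the reduction to $Q[\phi](L,N_1)\geq 0$ and $Q[\phi](L,N_2)\geq 0$ by bilinearity is valid, your expansions of these two quantities agree with the values of $Q$ on the tetrad (using $g(e_A,N_i)=0$), and in both cases the null relation $l_3^2+l_4^2=2l_1l_2$ is exactly what makes the indefinite cross terms harmless, via quadratic minimization in $(s,t)$ in the first case and Cauchy--Schwarz in the second. The route is genuinely a little different from the paper's, though. The paper does not split $Y$; it expands $Q[\phi](L,Y)$ all at once and rewrites it as an \emph{exact} sum of squares,
\begin{align*}
Q[\phi](L,Y)=\tfrac{1}{2}b_1l_2\bigl[(\tfrac{l_3}{l_2}N_1\phi+e_3\phi)^2+(\tfrac{l_4}{l_2}N_1\phi+e_4\phi)^2\bigr]
+b_2l_2\bigl(N_2\phi+\tfrac{l_3}{2l_2}e_3\phi+\tfrac{l_4}{2l_2}e_4\phi\bigr)^2+\cdots,
\end{align*}
with no inequality used. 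This costs a slightly more elaborate completion of squares but buys something your argument discards: the vectors appearing inside those squares are precisely the directions $E_1,E_2,E_3$ orthogonal to $L$, i.e.\ tangent to the null hypersurface with normal $L$, and the resulting expression is taken as the \emph{definition} of the boundary energy $E_{\mathcal{H}}[\phi,b_1,b_2]$ on $\mathcal{H}$ used throughout the later energy estimates. So your Cauchy--Schwarz shortcut establishes the stated nonnegativity perfectly well, but if you only prove the lemma this way you lose the structural information (energy on a null hypersurface controls exactly the tangential derivatives) that the paper needs downstream.
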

\begin{proof}
First, by $g(L,L)=0$, we have
 $$2l_1l_2=l_3^2+l_4^2.$$
And
\begin{align}\begin{split}
&Q[\phi](L,Y)\\
=&b_1l_1Q[\phi](N_1,N_1)+b_2l_2Q[\phi](N_2,N_2)+(b_2l_1+b_1l_2)Q[\phi](N_1,N_2)\\
&+l_3b_1(e_3\phi)(N_1\phi)+l_3b_2(e_3\phi)(N_2\phi)
+l_4b_1(e_4\phi)(N_1\phi)+l_4b_2(e_4\phi)(N_2\phi)\\
=&\frac{1}{2}b_1l_2[(\frac{l_3}{l_2}N_1\phi+e_3\phi)^2+(\frac{l_4}{l_2}N_1\phi+e_4\phi)^2]
+b_2l_2(N_2\phi+\frac{l_3}{2l_2}e_3\phi+\frac{l_4}{2l_2}e_4\phi)^2\\
&+\frac{1}{4}b_2l_2(\frac{l_3}{l_2}e_4\phi-\frac{l_4}{l_2}e_3\phi)^2+(b_2l_1+b_1l_2)\phi^2,
\end{split}\end{align}
hence, we know $Q[\phi](L,Y)\geq0$.
\end{proof}
\begin{remark}
A straightforward calculation yields
\begin{align}\begin{split}
&g(L,\frac{l_3}{l_2}N_1+e_3)=0,\ \ g(L,\frac{l_4}{l_2}N_1+e_4)=0,\\
&g(L,N_2+\frac{l_3}{2l_2}e_3+\frac{l_4}{2l_2}e_4)=0,\ \ g(L, \frac{l_3}{l_2}e_4-\frac{l_4}{l_2}e_3)=0.
\end{split}\end{align}
Hence, if $\Sigma$ is a null hypersurface with normal null vector field $L$, then, the energy $Q[\phi](L,Y)$ on $\Sigma$ only contains the informations of the derivative of $\phi$ which tangent to $\Sigma$.
\end{remark}
\begin{remark}
We have
$$\frac{l_3}{l_2}e_4-\frac{l_4}{l_2}e_3=l_3(\frac{l_4}{l_2}N_1+e_4)
-l_4(\frac{l_3}{l_2}N_1+e_3),$$
and
$$\frac{l_3}{l_2}N_1+e_3,\ \frac{l_4}{l_2}N_1+e_4,\ N_2+\frac{l_3}{2l_2}e_3+\frac{l_4}{2l_2}e_4,$$
are linearly independent. Furthermore, we have
\be
L=l_2(N_2+\frac{l_3}{2l_2}e_3+\frac{l_4}{2l_2}e_4)+\frac{l_3}{2}(\frac{l_3}{l_2}N_1+e_3)
+\frac{l_4}{2}(\frac{l_4}{l_2}N_1+e_4).
\ee
\end{remark}
\begin{remark}
We consider the boundary $\mathcal{H}$ of $\Omega_{T,0}$, then, the null normal vector field $N_\mathcal{H}$ is given by
$$N_\mathcal{H}=l_1N_1+l_2N_2+l_3e_3+l_4e_4,$$
with $l_1\geq0$, $l_2\geq0$. And, we know, $\mathcal{H}\cap \{\tau=0\}=S_{0,0}$, and on
$S_{0,0}$, there satisfies $N_\mathcal{H}=N_2$, i.e. $l_2=1$ and $l_1=l_3=l_4=0$.
\end{remark}
\begin{remark}
We assume
$$g(N_\mathcal{H},N_1)=-l_2\neq0,$$
satisfied on $\mathcal{H}$, which means, no where on $\mathcal{H}$ the null hypersurface $\{\tau=const\}$ will tangent to $\mathcal{H}$.
\end{remark}
\begin{remark}
We denote the three tangential vector fields of $\mathcal{H}$ by
$$E_1=\frac{l_3}{l_2}N_1+e_3,\ \ E_2=\frac{l_4}{l_2}N_1+e_4,\ E_3=N_2+\frac{l_3}{2l_2}e_3+\frac{l_4}{2l_2}e_4.$$
And, we definite
\begin{align}\begin{split}
&E_{\mathcal{H}}[\phi,b_1,b_2]\\
=&\frac{b_1l_2}{2}[(E_1\phi)^2+(E_2\phi)^2]+b_2l_2(E_3\phi)^2
+\frac{b_2l_2}{4}(l_3E_2\phi-l_4E_1\phi)^2+(b_2l_1+b_1l_2)\phi^2,
\end{split}\end{align}
which relate to the energy on $\mathcal{H}$.
\end{remark}
\subsection{$H^1$-estimates in $\mathcal{M}_1$}
In this subsection, we will get the energy estimates in $\mathcal{M}_1$. We use the method which has been used in \cite{LinZhang2017}. For completeness, we will state some result in \cite{LinZhang2017} and give simple proof.

First, we make the $H^1$-estimate in $\Omega_{T}\cap \mathcal{M}_1$, i.e $\Omega_{T,0}$. A simple foliation of this region by the null hypersurfaces $\tau=const$,  but the energy on null hypersurface don't contain the transversal derivative information. In \cite{LinZhang2017}, a weighted energy estimate method has been used, for which, the domain integral control all derivatives, and making suitable choose such that the boundary integrals have good signs.

For the needs of higher derivative estimates, we consider more general wave equations
\be\label{general-wave-eq}
\mathcal{L}v=\Box_gv+a^i\partial_iv+\omega v=f.
\ee
And, for any $C^1$-function $h$, and vector field $X$, we have
\begin{align}\label{divf}
\mathrm{div}(hP[\phi,X])
=(\Box_{g}\phi)(hX\phi)+\frac{1}{2}hQ[\phi]_{\alpha\beta}\ ^{(X)}\pi^{\alpha\beta}
+Q[\phi](\nabla h, X)
-2g(X, \nabla\phi)h\phi,
\end{align}
where
\begin{equation*}
P[\phi,X]_\alpha=Q[\phi]_{\alpha\beta}X^\beta,\ \ ^{(X)}\pi^{\alpha\beta}=\partial^{\alpha}X^{\beta}
+\partial^{\beta}X^{\alpha}-X(g^{\alpha\beta}).
\end{equation*}
\begin{theorem}\label{main-theorem}
For some fixed $T>0$, and $v$ satisfies \eqref{general-wave-eq}. Then, there exists constants $q_0>0, l>0$ depending on $|g^{ij}|_{C^1(\Omega_T)}$, $|a^i|_{L^\infty(\Omega_T)}$ and $|\omega|_{L^\infty(\Omega_T)}$, such that, for any
$q>q_0$, and $h=e^{-lq\tau+qz}$, \footnote{For a hypersurface $\Sigma$, we can define the $H^p(\Sigma)$ space and the corresponding $H^p(\Sigma)$-norm, with derivatives taken only with respect to variables on $\Sigma$. And, let $h$ the positive function, define
$$||u||_{H_h^p(\Omega)}=\sum_{i=0}^{p}||h^{\frac{1}{2}}\nabla^iu||_{L^2(\Omega)},$$
and similarly for $||u||_{H_h^p(\Sigma)}$.}
\begin{align}\begin{split}
&q^{\frac{1}{2}}||v||_{H_h^1(\Omega_{T,0})}+||\partial_z v||_{L_h^2(\mathcal{T})}+||v||_{H_h^1(\Sigma_T)}
+\int_{\mathcal{H}}hE_{\mathcal{H}}[v,1,1]\text{d}\mathcal{H}\\
\leq&\ C\{||v||_{H_h^1(\Sigma_0)}+||v||_{H_h^1(\mathcal{T})}+||f||_{L_h^2(\Omega_{T,0})}\},
\end{split}\end{align}
where $C$ is a constant depending on $|g^{ij}|_{C^1(\Omega_T)}$, $|a^i|_{L^\infty(\Omega_T)}$ and $|\omega|_{L^\infty(\Omega_T)}$.
\end{theorem}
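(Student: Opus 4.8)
The plan is to use the multiplier (vector field) method with the energy-momentum tensor $Q[v]$ and the future-pointing null vector fields $N_1=\partial_z$, $N_2$, combined with the weight $h=e^{-lq\tau+qz}$. Concretely, I would apply the divergence identity \eqref{divf} with $X=N_1+N_2$ (a uniformly timelike, future-pointing multiplier) and $\phi=v$, then integrate over $\Omega_{T,0}$ and use the divergence theorem. The left-hand side of \eqref{divf} integrates to boundary terms over the four regular faces $\Sigma_0$, $\Sigma_T$, $\mathcal T$ and $\mathcal H$ (using Remark stating that for $\nu$ small $\partial\Omega_{T,\nu}$ consists of these four regular hypersurfaces, and then letting $\nu\to 0$). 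On $\Sigma_T$ and $\Sigma_0$ the conormal is $N_1$, so the flux is $Q[v](N_1, h(N_1+N_2))=h\big((\partial_z v)^2+\tfrac12 g^{AB}\partial_A v\partial_B v+v^2\big)$, which by Lemma~\ref{lem} (with all $a_i\ge 0$) is nonnegative and in fact comparable to $\|v\|_{H^1_h(\Sigma_T)}^2$ modulo the missing $N_2 v$ direction — but on $\Sigma_0$ the whole term is controlled by the data $\varphi$. On $\mathcal T=\{z=1/R\}$ the conormal is (a multiple of) $N_1$ again up to lower-order pieces, giving the $\|\partial_z v\|^2_{L^2_h(\mathcal T)}$ term plus the tangential derivatives of $v|_{\mathcal T}=\psi_1$, which are data. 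On $\mathcal H$ the conormal is the null vector $N_{\mathcal H}=l_1N_1+l_2N_2+l_3e_3+l_4e_4$ with $l_1,l_2\ge 0$, and by the second Lemma together with the Remark defining $E_{\mathcal H}[\phi,b_1,b_2]$, the flux $Q[v](N_{\mathcal H}, N_1+N_2)$ equals $E_{\mathcal H}[v,1,1]\ge 0$; this is exactly the $\mathcal H$-term on the left. So every boundary contribution is either a good (nonnegative) term we keep on the left or a data term we move to the right.

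The bulk term is where the weight earns its keep. Expanding $\mathrm{div}(hP[v,X])$, the three interior contributions are: the deformation term $\tfrac12 h Q[v]_{\alpha\beta}\,{}^{(X)}\pi^{\alpha\beta}$, the weight-derivative term $Q[v](\nabla h, X)$, and the lower-order terms $-2g(X,\nabla v)h v$ together with the $a^i\partial_i v$ and $\omega v$ contributions coming from rewriting $\Box_g v = f - a^i\partial_i v - \omega v$. The key point is that $\nabla h = h(q\,\nabla z - lq\,\nabla\tau)$, and since $\{z=0\}$ is timelike ($\hat g(\nabla z,\nabla z)=-\Lambda/3>0$) while $\nabla\tau=N_1$ is null, for $l$ large enough $\nabla h$ is a large negative-definite-pointing multiple: $-\nabla h / (qh)$ is uniformly future-timelike. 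Hence $Q[v](\nabla h, X) = qh\, Q[v]\big((l\nabla\tau-\nabla z), X\big)$, and writing $l\nabla\tau - \nabla z$ in the $\{N_1,N_2\}$ frame with coefficients that are positive for $l$ large, Lemma~\ref{lem} gives $Q[v](\nabla h,X)\ge c\, q h\,\big((\partial_z v)^2+(N_2 v)^2 + g^{AB}\partial_A v\partial_B v + v^2\big)$, i.e. a full coercive $q\|v\|^2_{H^1_h(\Omega_{T,0})}$. The deformation term and the $a^i,\omega$ terms are bounded by $C\,h(|\nabla v|^2+v^2)$ with $C$ independent of $q$ (depending only on $|g^{ij}|_{C^1}$, $|a^i|_{L^\infty}$, $|\omega|_{L^\infty}$), so choosing $q_0$ so that $cq > 2C$ for $q>q_0$ absorbs them. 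The source term $-2(X\phi)hf \le \varepsilon q h|\nabla v|^2 + \varepsilon^{-1} q^{-1} h f^2$ is likewise absorbed at the cost of $\|f\|^2_{L^2_h(\Omega_{T,0})}$ on the right (the $q^{-1}$ is harmless). Collecting terms yields $q\|v\|^2_{H^1_h(\Omega_{T,0})} + \|\partial_z v\|^2_{L^2_h(\mathcal T)} + \|v\|^2_{H^1_h(\Sigma_T)} + \int_{\mathcal H} h E_{\mathcal H}[v,1,1] \le C(\|v\|^2_{H^1_h(\Sigma_0)} + \|v\|^2_{H^1_h(\mathcal T)} + \|f\|^2_{L^2_h(\Omega_{T,0})})$; taking square roots and noting $q^{1/2}\le$ the coefficient gives the stated inequality.

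I expect the main obstacle to be bookkeeping the sign of the boundary flux on $\mathcal T$ and $\mathcal H$ precisely, since on those faces the conormal is not exactly $N_1$ or $N_2$ but contains $e_3,e_4$ and lower-order $z^2V$ corrections; one must verify that the $l_1\ge 0$, $l_2\ge 0$ conditions from the Remarks actually hold for $\mathcal H$ (this uses that $\mathcal H$ is an ingoing null cone emanating from $S_{0,0}$ with $N_{\mathcal H}=N_2$ there, so by continuity $l_2>0$ on a neighborhood, and the sign of $l_1$ follows from the cone being ingoing and $z>0$) and that on $\mathcal T$ the tangential derivatives appearing are genuinely derivatives of $\psi_1$ alone. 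A secondary technical point is the approximation argument $\nu\to 0$: one proves the estimate on $\Omega_{T,\nu}$ where all four faces are smooth and the divergence theorem applies cleanly, controls the $H_\nu$-flux by $\int_{\mathcal H}hE_{\mathcal H}$ in the limit using that $E_{\mathcal H}\ge 0$, and passes to the limit by monotone/dominated convergence — here the conformal $C^k$ regularity of $\hat g$ up to $z=0$ is what guarantees all coefficients stay bounded uniformly as $\nu\to 0$.
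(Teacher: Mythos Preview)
Your overall architecture is right and matches the paper: integrate $\mathrm{div}(hP[v,X])$ over $\Omega_{T,0}$ with a timelike multiplier $X$ and weight $h=e^{-lq\tau+qz}$, and use the exponential weight to make the bulk term coercive and absorb the deformation and lower-order pieces. The treatment of $\Sigma_0$, $\Sigma_T$, $\mathcal H$ and the bulk is essentially what the paper does.

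The genuine gap is on $\mathcal T=\{z=1/R\}$. Your claim that ``the conormal is (a multiple of) $N_1$ again up to lower-order pieces'' is false: $\mathcal T$ is a \emph{timelike} hypersurface, and its conormal is
\[
\nabla z \;=\; -N_2+\tfrac12 z^2V\,N_1,
\]
not $N_1$. Consequently the flux $Q[v](\nabla z,X)$ is \emph{not} of the form $(\partial_z v)^2+\{\text{tangential}\}$; it contains $-(N_2v)^2$, and $N_2v$ involves the transversal derivative $\partial_z v$ with a coefficient $\tfrac12 z^2V$ that is order one on $\mathcal T$. Expanding (this is the paper's computation \eqref{3a}), the net coefficient of $(\partial_z v)^2$ in $Q[v](\nabla z,\,mN_1+N_2)$ is $\tfrac12 z^2V\bigl(m-\tfrac12 z^2V\bigr)$. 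With your fixed choice $m=1$ this coefficient need not be positive (it fails whenever $z^2V>2$ on $\mathcal T$, which is allowed by the hypotheses since $z^2V\to -\Lambda/3$ and $\Lambda<0$ is arbitrary). Then the $\mathcal T$-flux cannot be split into ``good $(\partial_z v)^2$ on the left plus tangential data on the right'', and the estimate collapses.

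The paper's fix is exactly to keep a free parameter in the multiplier: take $Y=mN_1+N_2$ and \emph{choose $m$ large} (specifically $m-\tfrac12 z^2V\ge 1$ on $\mathcal T\cap\{\tau\le T\}$). This forces the $(\partial_z v)^2$ coefficient to be positive; the remaining cross terms $-z^2V\partial_\tau v\,\partial_z v-z^2V g^{1A}\partial_A v\,\partial_z v$ are then absorbed by Cauchy--Schwarz into that good term plus tangential data. Once $m$ is fixed this way, $l$ is chosen large depending on $m$ to recover bulk coercivity, and finally $q$ large to absorb the $^{(Y)}\pi$, $a^i$, $\omega$ contributions. So your proposal is repairable, but only after replacing $X=N_1+N_2$ by $X=mN_1+N_2$ with $m$ chosen from the $\mathcal T$-analysis; this extra degree of freedom is the missing idea.
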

\begin{proof}
For some constants $m,p,q>0$ to be chosen later, we consider a timelike vector field
\begin{equation}\label{eq-definition-Y}Y=mN_1+N_2,\end{equation} and functions
\begin{equation}\label{eq-definition-h}w=-(p\tau-qz),\quad h= e^{w}.\end{equation}
Then, an integration over $\Omega_{T,0}$ yields
\begin{equation}\label{sf}
\int_{\Omega_{T,0}}\mathrm{div}(hP)\  d\Omega=\int_{\partial\Omega_{T,0}}i_{hP}\ d\Omega,
\end{equation}
where $d\Omega$ is  the volume element.
 By (\ref{divf}), and (\ref{sf}), we obtain
\begin{align}\label{keyf}\begin{split}
&\int_{\Omega_{T,0}}\{Q[v](\nabla h, Y)
+\frac{1}{2}hQ[v]_{\alpha\beta}\ ^{(Y)}\pi^{\alpha\beta}-2g(Y, \nabla v)hv\\
&\quad\qquad\qquad +(\mathcal{L}v-a^{\alpha}\partial_{\alpha}v-\omega v)h(Yv)\}d\Omega\\
&\qquad= \int_{\mathcal{M}_1\cap\Sigma_T}hQ[v](\nabla\tau, Y)\sqrt{\gamma}dzd\Sigma
+\int_{\mathcal{H}}hE_\mathcal{H}[v,m,1] \text{d}\mathcal{H}\\
&\qquad\quad -\int_{\Sigma_0}hQ[v](\nabla\tau, Y)\sqrt{\gamma}dzd\Sigma
+\int_{\mathcal{T}}hQ[v](\nabla z, Y)\sqrt{\gamma}d\tau d\Sigma.
\end{split}\end{align}
where $\text{d}\mathcal{H}$ is the volume element for $\mathcal{H}$ adapting to $N_\mathcal{H}$.
\begin{remark}
By our choosing of $h$, we have
$$\nabla h=-h[(p-\frac{1}{2}z^2Vq)N_1+qN_2].$$
We choose $p,q$ such that $\nabla h$ is timelike, hence, $Q[v](\nabla h,Y)$ control all derivatives of $v$, and can absorb the other terms in the domain integral.
\end{remark}
In fact
\begin{align}\label{1}\begin{split}
&-Q[v](\nabla w, Y)= Q[v]([p-\frac{1}{2}z^{2}Vq]N_1+qN_{2}, mN_1+N_{2})\\
= & q(\partial_{\tau}v)^{2}+[mp+\frac{1}{2}qz^{2}V(\frac{1}{2}z^{2}V-m)](\partial_{z}v)^{2}
 +(mq+p-\frac{1}{2}qz^{2}V)[\frac12g^{AB}\partial_{A}v\partial_{B}v+v^{2}]\\
&\quad +qz^{2}V\partial_{\tau}v\partial_{z}v+2qg^{1A}\partial_{A}v\partial_{\tau}v
+qz^{2}Vg^{1A}\partial_{A}v\partial_{z}v+q(g^{1A}\partial_{A}v)^{2}.
\end{split}\end{align}
We consider $p=lq$, for some constant $l>0$ sufficiently large depending on $m$ and $|g^{ij}|_{L^\infty(\Omega_T)}$, then
\begin{align*}
-Q[v](\nabla w, Y)\geq  \frac{1}{2}q\{(\partial_{\tau}v)^{2}+(\partial_{z}v)^{2}
+g^{AB}\partial_{A}v\partial_{B}v+v^{2}\}.
\end{align*}
Furthermore, by choosing $q$
sufficiently large, depending on $|g^{ij}|_{C^{1}(\Omega_{T})}$,
$|a^{\alpha}|_{L^{\infty}(\Omega_{T})}$, and $|\omega|_{L^{\infty}(\Omega_{T})}$, we obtain
\begin{align}\label{eq-estimate-domain-integral}\begin{split}
& \int_{\Omega_{T,0}}\big\{Q[v](\nabla h, Y)
+\frac{1}{2}hQ[v]_{\alpha\beta}\ ^{(Y)}\pi^{\alpha\beta}-2g(Y, \nabla v)hv\\
 &\qquad\quad+(\mathcal{L}v-a^{\alpha}\nabla_{\alpha}v-\omega v)h(Yv)\big\}d\Omega\\
&\quad \leq -\frac{1}{4}\int_{\Omega_{T,0}}h\{q[(\partial_{\tau}v)^{2}
+(\partial_{z}v)^{2}+g^{AB}\partial_{A}v\partial_{B}v+v^{2}]-(\mathcal{L}v)^{2}\}d\Omega.
\end{split}\end{align}
Now we analyze the boundary integrals. By simple calculation, we have
\begin{equation}\label{3}
Q[v](\nabla\tau, Y)=Q[v](N_1,mN_1+N_2)= m(\partial_{z}v)^{2}+\frac{1}{2}g^{AB}\partial_{A}v\partial_{B}v+v^{2},
\end{equation}
and
\begin{equation}\label{3a}\begin{split}
&Q[v](\nabla z, Y)=Q[v](-N_2+\frac{1}{2}z^2VN_1,mN_1+N_2)\\ =&-(N_2v)^{2}+\frac{1}{2}z^2Vm(\partial_{z}v)^{2}
-(m-\frac{1}{2}z^2V)[\frac{1}{2}g^{AB}\partial_{A}v\partial_{B}v+v^{2}]\\
=&-(\partial_{\tau}v+g^{1A}\partial_{A}v)^{2}
- (m-\frac12z^2V)(\frac{1}{2}g^{AB}\partial_{A}v\partial_{B}v+v^{2})\\
&\qquad+\frac12z^2V(m-\frac{1}{2}z^{2}V)(\partial_{z}v)^{2}
-z^{2}V\partial_{\tau}v\partial_{z}v
-z^{2}Vg^{1A}\partial_{A}v\partial_{z}v.
\end{split}\end{equation}
\begin{remark}
We know the boundary hypersurface $\mathcal{T}$ is timelike, hence the energy $Q[v](\nabla z,Y)$ is not positive on $\mathcal{T}$. As our initial data setting on this boundary, hence we know $v$, $\partial_\tau v$, and $\nabla_{S^2}v$, but we does not yield any information on $\partial_zv|_\mathcal{T}$. We choose $m$, such that the term $(\partial _z v)^2$ in \eqref{3a} has a good sign, so we can control this boundary term by initial data.
\end{remark}
By choosing $m$ large, such that $m-\frac{1}{2}z^2V\geq 1$ on $\mathcal{T}\cup \{\tau\leq T\}$, we get
\begin{align*}
& q\int_{\Omega_{T,0}}h\left[(\partial_{\tau}v)^{2}+(\partial_{z}v)^{2}
+g^{AB}\partial_{A}v\partial_{B}v+v^{2}\right]d\Omega
+\int_{\mathcal{T}}h(\partial_{z}v)^{2}d\tau d\Sigma\\
&+\int_{\Sigma_T\cap\mathcal{M}_1}h[(\partial_zv)^2
+g^{AB}\partial_Av\partial_Bv+v^2]dzd\Sigma+\int_{\mathcal{H}}hE_\mathcal{H}[v,1,1] \text{d}\mathcal{H}\\
\leq&  C\{\int_{\Omega_{T,0}}hf^{2}d\Omega+
\int_{\Sigma_0}h[(\partial_zv)^2
+g^{AB}\partial_Av\partial_Bv+v^2]dzd\Sigma\\
&\qquad+\int_{\mathcal{T}}h[(\partial_\tau v)^2
+g^{AB}\partial_Av\partial_Bv+v^2]d\tau d\Sigma\},
\end{align*}
where $C$ is a constant depending on $|g^{ij}|_{C^1(\Omega_T)}$, $|a^i|_{L^\infty(\Omega_T)}$ and $|\omega|_{L^\infty(\Omega_T)}$.
\end{proof}
\begin{remark}
 For the boundary integral on $\mathcal{T}$, we take a  small $m$, such that $m-\frac{1}{2}z^2V<0$ on $\mathcal{T}$, hence, the sign of the term $\frac{1}{2}g^{AB}\partial_{A}v\partial_{B}v+v^{2}$ in \eqref{3a} is positive, then we have another version of the $H^1$-estimate,
\begin{align}\begin{split}
&q^{\frac{1}{2}}||v||_{H_h^1(\Omega_{T,0})}+||\partial_z v||_{L_h^2(\mathcal{T})}+||v||_{H_h^1(\Sigma_T)}
+\int_{\mathcal{H}}hE_{\mathcal{H}}[v,1,1]\text{d}\mathcal{H}\\
\leq&\ C\{||v||_{H_h^1(\Sigma_0)}+||N_2v||_{L_h^2(\mathcal{T})}+||f||_{L_h^2(\Omega_{T,0})}\}.
\end{split}\end{align}
\end{remark}
\subsection{Higher order estimates in $\mathcal{M}_1$}
In this subsection, we derive $H^k$-estimates, for $k\geq2$. To this end, we need to differentiate the equations \eqref{wave-equ}, In fact, let $X$ be a vector  field with the deformation tensor $\pi=\ ^{(X)}\pi$, then (see \cite{Alinhac}),
\begin{equation*}
[\Box,X]\phi=\pi^{\alpha\beta}\nabla^{2}\phi_{\alpha\beta}
+\nabla_{\alpha}\pi^{\alpha\beta}\partial_{\beta}\phi
-\frac{1}{2}\partial^{\alpha}(tr\pi)\partial_{\alpha}\phi.
\end{equation*}
In the following, we denote the multi-indices $\alpha,\beta\in \mathbb{Z}_+^4$ by
$\alpha=(\alpha_0,\alpha_1,\alpha_2,\alpha_3)$, $\beta=(\beta_0,\beta_1,\beta_2,\beta_3)$, etc. Take an arbitrary multi-index $\alpha$ with $|\alpha|=p$, then
\be\label{higher-wave-equ}
\Box_g(\partial^\alpha v)=\sum_{|\beta|=p+1}c_{\alpha\beta}\partial^\beta v+f_\alpha,
\ee
where
\be
f_\alpha=\sum_{|\beta|\leq p}c_{\alpha\beta}\partial^\beta v.
\ee
\begin{remark}
We will make $H^1$-estimate of  \eqref{higher-wave-equ}, there are two issues we need to resolve. First, the right hand  contains derivatives of $v$ of order $|\alpha|+1$,
not all of which can be written as $\partial_i\partial^\alpha v$, for some $i=0, 1, 2, 3$. If we
simply apply the derived $H^1$-estimates, there are derivatives of order $|\alpha|+1$ in the right-hand side,
which are not yet controlled. Second, we need to determine
the initial values of $\partial^\alpha v$ on $\Sigma_0$ and
the boundary values on $\Sigma_1$.
\end{remark}
\begin{theorem}\label{main-theorem-high} For any $p\geq1$, there exists $q_0$ depending on $|g^{ij}|_{C^1(\Omega_T)}$, $|a^i|_{L^\infty(\Omega_T)}$, $|\omega|_{L^\infty(\Omega_T)}$ and $p$, and  $l$  the same  as in theorem \ref{main-theorem}, then, for $q\geq q_0$ and $h=e^{-ql\tau+qz}$, we have
\begin{align}\begin{split}
&\sum_{|\alpha|=p+1}(q^{\frac{1}{2}}||\partial^\alpha v||_{L_h^2(\Omega_{T,0})}+||\partial^\alpha v||_{L_h^2(\mathcal{T})}
+\int_{\mathcal{H}}hE_{\mathcal{H}}[\partial^\alpha v,1,1]\text{d}\mathcal{H})\\
\leq&\ C\{||\varphi||_{H_h^{(2p+2)}(\Sigma_0)}+||\psi_1||_{H_h^{2p+2}(\mathcal{T})}
+||f||_{H_h^{2p+1}(\Omega_{T,0})}\},
\end{split}\end{align}
where $C$ is constant depending on
\end{theorem}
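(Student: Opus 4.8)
The plan is to prove Theorem \ref{main-theorem-high} by induction on the order $p$, using the $H^1$-estimate of Theorem \ref{main-theorem} applied to each differentiated equation \eqref{higher-wave-equ}. The base case $p=0$ is essentially Theorem \ref{main-theorem} itself (with $f_\alpha=f$). For the inductive step, suppose the estimate holds for all orders up to $p-1$; I apply the differentiated equation for every multi-index $\alpha$ with $|\alpha|=p$. Writing $\Box_g(\partial^\alpha v)=\sum_{|\beta|=p+1}c_{\alpha\beta}\partial^\beta v+f_\alpha$ and moving the genuinely transversal terms to the left, I want to recast this as an equation of the form $\mathcal{L}(\partial^\alpha v)=$ (terms of order $\le p$) $+$ (order $p+1$ terms that are \emph{not} of the form $\partial_i\partial^\alpha v$ for $i\ge 1$). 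The first term on the right of \eqref{higher-wave-equ} is the obstruction flagged in the remark: among the order-$(p+1)$ derivatives appearing, those equal to $\partial_i\partial^\alpha v$ can be absorbed into $\mathcal{L}$ or controlled by summing the estimate over all $|\alpha|=p$ and using the smallness coming from the large weight $q$; the remaining ones must be re-expressed.

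The key device for handling the problematic top-order terms is the wave equation itself: since $\hat g^{01}=1$ (the coefficient of $\partial_\tau\partial_z$ in $|\nabla\phi|^2$ is $2$ and nondegenerate), the operator $\Box_g$ lets us solve algebraically for $\partial_\tau\partial_z\phi$ in terms of $\partial_z^2\phi$, tangential derivatives $\partial_A\partial_\cdot\phi$, and lower-order terms. Concretely, any order-$(p+1)$ derivative containing at least one $\partial_\tau$ and at least one $\partial_z$ can, via repeated use of the PDE, be traded for derivatives with at most one $\partial_\tau$ factor, at the cost of two extra tangential ($\partial_z$ or $\partial_A$) derivatives — this is exactly why the right-hand side of the final estimate involves the $H^{2p+2}$-norm of the data rather than $H^{p+1}$: each step up in $p$ costs two derivatives of regularity. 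So I would set up an induction not just on $p$ but organized by the number of $\partial_\tau$-derivatives, peeling them off one at a time against the equation; after this reduction every remaining top-order term is of the form $\partial_z\partial^{\alpha'}v$ or $\partial_A\partial^{\alpha'}v$ with $|\alpha'|=p$, i.e. it is $\partial_i$ of something already being estimated, and summing Theorem \ref{main-theorem} over all such multi-indices closes the loop once $q$ is large enough to beat the structure constants $c_{\alpha\beta}$.

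For the initial and boundary data of $\partial^\alpha v$, I use \eqref{wave-boundary}: on $\Sigma_0$, tangential (i.e. $z$ and $S^2$) derivatives of $\partial^\alpha v$ are determined by $\varphi$ directly, while each $\partial_\tau$ hitting $v$ on $\Sigma_0$ is converted, again through the PDE $\Box_g v+\omega v=0$, into two spatial derivatives — hence the $H^{2p+2}(\Sigma_0)$ norm of $\varphi$ appears; likewise on $\mathcal{T}=\{z=1/R\}$, the tangential data come from $\psi_1$ and each $\partial_z$ is exchanged for $\partial_\tau$-and-tangential derivatives via the equation, producing the $H^{2p+2}(\mathcal{T})$ norm of $\psi_1$ (or, in the alternate version of the $H^1$-estimate in the remark, controlling $N_2\partial^\alpha v$ on $\mathcal{T}$). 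The forcing $f$ gets differentiated up to order $\le 2p+1$ times in the process, matching the stated $H_h^{2p+1}(\Omega_{T,0})$ norm.

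The main obstacle is the bookkeeping of the top-order terms: one must verify that after the PDE-substitutions every order-$(p+1)$ term that is \emph{not} of the form $\partial_i\partial^\alpha v$ is either genuinely lower order or can be bounded, with a constant absorbable by choosing $q_0$ large, by $\sum_{|\alpha|=p}q^{1/2}\|\partial_i\partial^\alpha v\|_{L_h^2(\Omega_{T,0})}$ on the left; getting the constants and the exact regularity count $2p+2$ to come out right requires care but no new idea. A secondary technical point is justifying the integration-by-parts/divergence identity \eqref{sf} for $\partial^\alpha v$ on the corner domain $\Omega_{T,0}$ (the four regular boundary pieces meeting along edges), which is handled by the same approximation argument as in \cite{LinZhang2017} together with the observation that the $\mathcal{H}$-boundary integrand $E_{\mathcal{H}}[\partial^\alpha v,1,1]$ has the good sign established in Lemma \ref{lem} and the subsequent lemmas.
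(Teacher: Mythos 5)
Your overall skeleton matches the paper's: differentiate the equation as in \eqref{higher-wave-equ}, apply Theorem \ref{main-theorem} to each $\partial^\alpha v$, absorb the interior order-$(p+1)$ terms by summing over all $|\alpha|=p$ and taking $q$ large, and determine the data of $\partial^\alpha v$ on $\Sigma_0$ by viewing the equation restricted to the characteristic surface $\{\tau=0\}$ as a transport ODE in $z$ for the $\tau$-derivatives (this is where the loss from $H^{p+1}$ to $H^{2p+2}$ actually originates, and your account of that part is essentially the paper's Lemma). However, there is a genuine gap in your treatment of the boundary terms on $\mathcal{T}$. You claim that ``each $\partial_z$ is exchanged for $\partial_\tau$-and-tangential derivatives via the equation'' on $\mathcal{T}$. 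This cannot work: restricted to the timelike surface $\{z=1/R\}$ the equation reads $z^2V\,\partial_z^2v+2\partial_\tau(\partial_zv)+2g^{1A}\partial_A(\partial_zv)+g^{AB}\partial_{AB}v+\cdots=f$, which is one relation involving the two unknown functions $\partial_zv|_{\mathcal T}$ and $\partial_z^2v|_{\mathcal T}$; it lets you express the \emph{second} normal derivative in terms of the first, but it cannot produce the first normal derivative from the Dirichlet datum $\psi_1$ alone (that is precisely the Dirichlet-to-Neumann map, which is nonlocal). The paper's remark after \eqref{3a} makes exactly this point: $\partial_zv|_{\mathcal T}$ is \emph{not} determined by the data.

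The device you are missing is the iteration \eqref{ineq-2}--\eqref{ineq-3}: because the multiplier $Y=mN_1+N_2$ with $m$ large makes the $(\partial_zv)^2$ coefficient in $Q[v](\nabla z,Y)$ favorable, the $H^1$-estimate \eqref{ineq-1} carries $\|\partial_z\partial^\alpha v\|_{L^2_h(\mathcal T)}$ on its \emph{left}-hand side. One then organizes the multi-indices by the number $\alpha_1$ of $z$-derivatives: the estimate at level $\alpha_1=r-1$ outputs control of the $\mathcal{T}$-traces with $r$ normal derivatives, which is exactly the boundary input needed at level $\alpha_1=r$. Descending from $r=p+1$ to $r=0$ reduces everything on $\mathcal{T}$ to purely tangential derivatives of $\psi_1$. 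Without this bootstrap (or some substitute for it), the term $\|\partial^\alpha v\|_{H^1_h(\mathcal T)}$ on the right of your $H^1$-estimates is uncontrolled whenever $\alpha$ contains a $z$-derivative, and the induction does not close. A secondary inaccuracy: your interior substitution trades $\partial_\tau\partial_z$ for \emph{two} spatial derivatives of the \emph{same} total order (it does not add two derivatives), and it is not the source of the $2p+2$ count; attributing the regularity loss to it rather than solely to the $\Sigma_0$ transport argument is a misdiagnosis, though it does not by itself break the proof.
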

\begin{proof}
By Theorem \ref{main-theorem}, and \eqref{higher-wave-equ}, we have, for $q>q_0$, and $h=e^{-ql\tau+qz}$,
\begin{align}\label{ineq-1}\begin{split}
&q^{\frac{1}{2}}||\partial^\alpha v||_{H_h^1(\Omega_{T,0})}+||\partial_z \partial^\alpha v||_{L_h^2(\mathcal{T})}+||\partial^\alpha v||_{H_h^1(\Sigma_T)}
+\int_{\mathcal{H}}hE_{\mathcal{H}}[\partial^\alpha v,1,1]\text{d}\mathcal{H}\\
\leq&\ C\{||\partial^\alpha v||_{H_h^1(\Sigma_0)}+||\partial^\alpha v||_{H_h^1(\mathcal{T})}+\sum_{|\beta|=p+1}||\partial^\beta v||_{L_h^2(\Omega_{T,0})}
+||f_\alpha||_{L_h^2(\Omega_{T,0})}\},
\end{split}\end{align}
where $C$ is constant depending on $|g^{ij}|_{C^2(\Omega_T)}$, $|a^i|_{C^1(\Omega_T)}$ and $|\omega|_{L^\infty(\Omega_T)}$.
Specially, from \eqref{ineq-1}, we have, for $1\leq r\leq p+1$,
\begin{align}\label{ineq-2}\begin{split}
&\sum_{|\alpha|=p+1,\alpha_1=r}||\partial^\alpha v||_{L_h^2(\mathcal{T})}\\
\leq&\ C\{\sum_{|\alpha|\leq p+1}||\partial^\alpha v||_{L_h^2(\Sigma_0)}+\sum_{|\alpha|=p+1,\alpha_1=r-1}||\partial^\alpha v||_{L_h^2(\mathcal{T})}+\sum_{|\alpha|= p}||\partial^\alpha v||_{L_h^2(\mathcal{T})}\\
&\qquad+\sum_{|\beta|=p+1}||\partial^\beta v||_{L_h^2(\Omega_{T,0})}
+\sum_{|\alpha|=p+1}||f_\alpha||_{L_h^2(\Omega_{T,0})}\}.
\end{split}\end{align}
Hence, we have, for any $0\leq r\leq p+1$,
\begin{align}\label{ineq-3}\begin{split}
&\sum_{|\alpha|=p+1,\alpha_1=r}||\partial^\alpha v||_{L_h^2(\mathcal{T})}\\
\leq&\ C\{\sum_{|\alpha|\leq p+1}||\partial^\alpha v||_{L_h^2(\Sigma_0)}+\sum_{|\alpha|=p+1,\alpha_1=0}||\partial^\alpha v||_{L_h^2(\mathcal{T})}+\sum_{|\alpha|= p}||\partial^\alpha v||_{L_h^2(\mathcal{T})}\\
&\qquad+\sum_{|\beta|=p+1}||\partial^\beta v||_{L_h^2(\Omega_{T,0})}
+\sum_{|\alpha|=p+1}||f_\alpha||_{L_h^2(\Omega_{T,0})}\}.
\end{split}\end{align}
By substituting \eqref{ineq-3} in \eqref{ineq-1}, we have, for $q$ sufficiently large,
\begin{align}\label{ineq-4}\begin{split}
&\sum_{|\alpha|=p+1}(q^{\frac{1}{2}}||\partial^\alpha v||_{L_h^2(\Omega_{T,0})}+||\partial^\alpha v||_{L_h^2(\mathcal{T})}
+\int_{\mathcal{H}}hE_{\mathcal{H}}[\partial^\alpha v,1,1]\text{d}\mathcal{H})\\
\leq&\ C\{\sum_{|\alpha|\leq p+1}||\partial^\alpha v||_{L_h^2(\Sigma_0)}+\sum_{|\alpha|=p+1,\alpha_1=0}||\partial^\alpha v||_{L_h^2(\mathcal{T})}\\
&\qquad+\sum_{|\alpha|= p}||\partial^\alpha v||_{L_h^2(\mathcal{T})}
+\sum_{|\alpha|=p+1}||f_\alpha||_{L_h^2(\Omega_{T,0})}\},
\end{split}\end{align}
the last two terms on the right hand can be controlled by lower order estimates,  the second term yield by $\psi_1$, and by the following estimates on $\Sigma_0$, we can get the desired result.
\end{proof}
To control the first term in the right hand of \eqref{ineq-4}, we need following Lemma, which also can be seen in \cite{LinZhang2017}.
\begin{lemma}
For any $\alpha$ with $|\alpha|=p$, and $0\leq\alpha_0=l\leq p$, we have
\begin{align}\label{initdata}
\|\partial^{\alpha}v\|_{L^{2}(\Sigma_{0})}
\leq C\big\{\|f\|_{H^{p+l-1}(\Omega_T)}+\|\varphi\|_{H^{p+l}(\Sigma_{0})}
+\|\psi_1\|_{H^{p+l}(\Sigma_{1})}\big\},
\end{align}
where $C$ is a positive constant depending only on $p$, $|g^{ij}|_{C^{p+l-1}(\Sigma_{0})}$,
$|a^{i}|_{C^{p+l-2}(\Sigma_{0})}$, and $|\omega|_{C^{p+l-2}(\Sigma_{0})}$.
\end{lemma}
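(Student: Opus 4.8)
The plan is to reduce the lemma to a standard fact about propagation of derivative constraints along a non-characteristic (timelike) hypersurface. On $\Sigma_0=\{\tau=0\}$ the data $\varphi$ prescribes $v$ and hence all purely tangential (i.e.\ $z,x^A$) derivatives of $v$; the wave equation, written in the form $2\partial_\tau\partial_z v = -z^2V\,\partial_z^2 v - 2g^{1A}\partial_A\partial_z v - g^{AB}\partial_A\partial_B v + (\text{lower order}) - \omega v + f$, expresses one $\partial_\tau$-derivative of the solution in terms of one fewer $\partial_\tau$-derivative at the cost of two extra tangential derivatives. I would make this precise by induction on $\alpha_0=l$. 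The base case $l=0$ is immediate: $\partial^\alpha v|_{\Sigma_0}$ with $\alpha_0=0$ is a tangential derivative of $\varphi$, so $\|\partial^\alpha v\|_{L^2(\Sigma_0)}\le C\|\varphi\|_{H^{p}(\Sigma_0)}$, and since $p+l=p$ here this is even stronger than claimed. (The role of $\psi_1$ and $\mathcal{T}$ enters only to handle mixed estimates near the corner $\Sigma_0\cap\mathcal{T}$, where tangential derivatives along $\Sigma_0$ of high order must be controlled in $L^2(\Sigma_0)$ rather than merely pointwise; this is why the statement carries a $\psi_1$ term.)

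For the inductive step, suppose the bound holds for all multi-indices with $\alpha_0\le l-1$. Given $\alpha$ with $\alpha_0=l\ge 1$, write $\partial^\alpha = \partial_\tau^{\,l}\partial_z^{\,\alpha_1}\partial_{x}^{\alpha_2+\alpha_3}$ and use the solved form of the equation above to trade \emph{one} $\partial_\tau$ for a $\partial_z^{-1}$ (against $\partial_\tau\partial_z$ on the left) — concretely, differentiate \eqref{general-wave-eq} by $\partial_\tau^{\,l-1}\partial_z^{\,\alpha_1-1}\partial_x^{\alpha_2+\alpha_3}$ (if $\alpha_1\ge 1$), restrict to $\Sigma_0$, and solve algebraically for $\partial^\alpha v|_{\Sigma_0}$. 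Every term on the right is then either of the form $\partial^{\alpha'}v|_{\Sigma_0}$ with $\alpha'_0\le l-1$ and $|\alpha'|\le |\alpha|+1 = p+1$, to which the inductive hypothesis applies and yields control by $\|\varphi\|_{H^{(p+1)+(l-1)}(\Sigma_0)} = \|\varphi\|_{H^{p+l}(\Sigma_0)}$ (and correspondingly $\|\psi_1\|_{H^{p+l}(\Sigma_1)}$, $\|f\|_{H^{p+l-1}(\Omega_T)}$, noting $\partial_\tau^{\,l-1}\partial_z^{\,\alpha_1-1}f$ has order $\le p+l-1$ since $l-1+\alpha_1-1+\alpha_2+\alpha_3 \le p+l-2$), or it is a genuinely tangential derivative of $\varphi$ of order $\le p+l$. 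When $\alpha_1=0$ one instead differentiates by $\partial_\tau^{\,l-1}\partial_x^{\alpha_2+\alpha_3}$ and uses that the left side now contains $2\partial_\tau\partial_z$ acting on $\partial_\tau^{\,l-1}(\cdots)v$, which on $\Sigma_0$ still determines $\partial_z\partial_\tau^{\,l}v|_{\Sigma_0}$; to recover $\partial_\tau^{\,l}v|_{\Sigma_0}$ itself (no $z$-derivative) one integrates this in $z$ from the corner $z=1/R$, where $\partial_\tau^{\,l}v|_{\mathcal T} = \partial_\tau^{\,l}\psi_1$ supplies the boundary value — this is precisely the point where $\|\psi_1\|_{H^{p+l}}$ is needed.

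The bookkeeping of which norm of $g^{ij},a^i,\omega$ appears is routine: solving for $\partial^\alpha v$ requires dividing by the coefficient of $\partial_\tau\partial_z v$ (which is $2$, a harmless constant), and the commutators generate at most $p+l-1$ derivatives of $g^{ij}$ and $p+l-2$ of $a^i,\omega$, matching the statement. The only genuine subtlety — and the step I expect to require the most care — is the $\alpha_1=0$ case and, more generally, the treatment near the corner $\Sigma_0\cap\mathcal T$: one must ensure that integrating the recovered $z$-derivative identities in $z$ does not lose a derivative, which forces the use of $\psi_1$ at two orders higher than one might naively expect and is the reason the hypothesis in Theorem~\ref{main-theorem-high} asks for $H^{2k}$ regularity of $\psi_1$ against only $H^k$ of the solution. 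I would close the argument by summing the resulting estimates over all $\alpha$ with $|\alpha|=p$, $\alpha_0\le p$, absorbing the finitely many constants into a single $C$.
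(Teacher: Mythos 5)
Your proposal follows essentially the same route as the paper: restrict the equation to $\Sigma_0$ so that it becomes a relation (an ODE in $z$) determining $\partial_\tau^{l}v$ from purely tangential data, with the corner value $\partial_\tau^{l}\psi_1$ on $\Sigma_0\cap\mathcal{T}$ supplying the integration constant, and iterate $l$ times, trading one $\tau$-derivative for one extra total derivative at each step — which is exactly the source of the $p+l$ loss. The only cosmetic difference is that the paper always writes down the explicit integrating-factor solution of the ODE and then applies tangential derivatives to it, rather than splitting into an algebraic case ($\alpha_1\geq 1$) and an integration case ($\alpha_1=0$); your version is fine provided the induction is organized so as to also absorb the terms of the same $\tau$-order but strictly lower total order coming from $a^0\partial_\tau v$.
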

\begin{proof}
By restricting the equation $\mathcal{L}u=f$ to $\Sigma_0$, we have
\begin{equation}\label{eq-ODE-tau}
2\partial_{z}(\partial_{\tau}v)+a^{0}\partial_\tau v=f_1,
\end{equation}
where
$$f_1=f+\sum_{|\beta|\le 2, \beta_0=0}a_\beta\partial^{\beta}v.$$
We point out that no derivatives of $v$ with respect to $\tau$ appear in $f_1$.
We view \eqref{eq-ODE-tau} as an ODE of $\partial_\tau v$ in $z$ on $\Sigma_0$ with the initial value given by
$$\partial_\tau v=\partial_\tau \psi_1\quad\text{on }\Sigma_0\cap \Sigma_1.$$
Then,
\begin{align*}
\partial_\tau v=\partial_\tau \psi e^{-\frac{1}{2}\int_{z}^{z_{0}}a^{0}dz'}
-\frac12\int_z^{z_0}f_1e^{\frac{1}{2}\int_{z'}^{z_{0}}a^{0}dz''}dz'
\quad\text{on }\Sigma_0.\end{align*}
Therefore,
\begin{equation}\label{initdata-1z}
\|\partial_\tau v\|_{L^{2}(\Sigma_{0})}
\leq C\left\{\|\varphi\|_{H^{2}(\Sigma_{0})}
+\|\partial_\tau \psi_1\|_{L^2(\Sigma_{0}\cap \Sigma_{1})}+\|f\|_{L^2(\Sigma_{0})}\right\}.
\end{equation}
For $l\ge 2$, by applying $\partial_{\tau}^{l-1}$ to \eqref{eq-ODE-tau}, we obtain
\begin{equation*}
2\partial_{z}(\partial_{\tau}^{l}v)+a\partial_{\tau}^{l}v=f_{l},
\end{equation*}
where
$$f_{l}=\partial_{\tau}^{l-1}f
+\sum_{|\beta|\le l+1, \beta_0\le l-1}c_{\beta}\partial^{\beta}v.$$
Similarly, we view this as an ODE of $\partial_{\tau}^{l}v$ in $z$ on $\Sigma_0$
with the initial value given  by
$$\partial_{\tau}^{l}v=\partial_{\tau}^{l}\psi_1\quad\text{on }\Sigma_0\cap \Sigma_1.$$
Then,
\begin{align*}
\partial^{l}_\tau v=\partial^{l}_\tau \psi_1 e^{-\frac{1}{2}\int_{z}^{z_{0}}adz'}
-\frac12\int_z^{z_0}f_{l}e^{\frac{1}{2}\int_{z'}^{z_{0}}adz''}dz'
\quad\text{on }\Sigma_0.\end{align*}
For
$\alpha=(l, \alpha_1, \alpha_2, \alpha_3)$, we write $\alpha'=(0, \alpha_1, \alpha_2, \alpha_3)$. Then,
\begin{align*}
\partial^\alpha v=
\partial^{\alpha'}\big\{\partial^{l}_\tau \psi_1 e^{-\frac{1}{2}\int_{z}^{z_{0}}adz'}
-\frac12\int_z^{z_0}f_{l}e^{-\frac12\int_{z'}^{z_{0}}adz''}dz'\big\}
\quad\text{on }\Sigma_0.\end{align*}
Hence, for $\alpha$ with $|\alpha|=p$ and $\alpha_0=l$,
\begin{align*}
\|\partial^{\alpha}v\|_{L^{2}(\Sigma_{0})}
\leq C\big\{\|f\|_{H^{p-1}(\Sigma_{0})}
+\|\partial_\tau^{l}\psi_1\|_{H^{p-l}(\Sigma_{0}\cap \Sigma_{1})}
+\sum_{|\beta|\le p+1, \beta_0\le l-1}\|\partial^{\beta}v\|_{L^{2}(\Sigma_{0})}
\big\}.
\end{align*}
By the trace theorem,
we have
\begin{align*}
\|\partial^{\alpha}v\|_{L^{2}(\Sigma_{0})}
\leq C\big\{\|f\|_{H^{p}(\Omega)}
+\|\psi_1\|_{H^{p+1}(\Sigma_{1})}
+\sum_{|\beta|\le p+1, \beta_0\le l-1}\|\partial^{\beta}v\|_{L^{2}(\Sigma_{0})}
\big\}.
\end{align*}
We note that in the summation above, the highest degree of derivatives increases by 1 but the highest
degree of derivatives with respect to $\tau$ decreases by 1.
 So we can iterate this inequality $l$ times and obtain the desired result.
\end{proof}
\begin{remark}
We point out that there is a loss of regularity which caused by to control the $\tau$-direction derivatives on $\Sigma_0$. In fact, in \cite{LinZhang2017} has proved, for $|\alpha|=k$, and $\alpha_0=l\leq k$, then
\be
||\partial^\alpha v||_{L^2(\mathcal{M}_1\cap\Omega_T)}\leq C\{||\varphi||_{H^{k+l}(\Sigma_0)}+||\psi_1||_{H^{k+l}(\mathcal{T})}\}.
\ee
\end{remark}
\begin{corollary}\label{M-1-coroll}
we have
\begin{align}\begin{split}
||v||_{H^k(\mathcal{M}_1\cap\Omega_T)}\leq C\{||\varphi||_{H^{2k}(\Sigma_0)}+||\psi_1||_{H^{2k}(\mathcal{T})}\},
\end{split}\end{align}
where C is a constant depending on $|g^{ij}|_{C^{2k-1}(\Omega_T)}$, $|\omega|_{C^{2k-2}(\Omega_T)}$, $k$, $T$, and $R$.
\end{corollary}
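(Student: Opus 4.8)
The plan is to assemble Corollary~\ref{M-1-coroll} by summing, over all multi-indices $\alpha$ with $|\alpha|\le k$, the higher-order estimate of Theorem~\ref{main-theorem-high} together with the initial-data bound of the preceding Lemma, and then absorbing the weight $h=e^{-ql\tau+qz}$ into the constant on the compact set $\Omega_T$. The starting point is that, since $0\le\tau\le T$ and $0\le z\le 1/R$, the weight $h$ satisfies $e^{-qlT}\le h\le e^{q/R}$, so for fixed $q=q_0$ (as produced in Theorem~\ref{main-theorem-high} at level $p=k-1$) the weighted norms $\|\cdot\|_{H_h^m}$ and the plain norms $\|\cdot\|_{H^m}$ are equivalent, with equivalence constant depending on $q_0$, $l$, $T$, $R$ — hence ultimately on $|g^{ij}|_{C^{2k-1}(\Omega_T)}$, $|\omega|_{C^{2k-2}(\Omega_T)}$, $k$, $T$, $R$, which are exactly the quantities allowed in the statement. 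From here on I drop the subscript $h$.

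Next I would run an induction on the order $p$ from $1$ up to $k$. At each level, Theorem~\ref{main-theorem-high} (applied with the source term $f\equiv 0$, since our equation is $\Box_g v+\omega v=0$, i.e. $a^i=0$ and $f=0$ in the notation of \eqref{general-wave-eq}, so the $f$-terms drop out) gives
\begin{align}\begin{split}
\sum_{|\alpha|=p}\|\partial^\alpha v\|_{L^2(\mathcal{M}_1\cap\Omega_T)}
\le C\Big\{\sum_{|\alpha|\le p}\|\partial^\alpha v\|_{L^2(\Sigma_0)}
+\|\psi_1\|_{H^{2p}(\mathcal{T})}
+\sum_{|\alpha|\le p-1}\|\partial^\alpha v\|_{L^2(\mathcal{T})}\Big\},
\end{split}\end{align}
where the last sum is the ``lower order'' contribution on $\mathcal{T}$, controlled by the inductive hypothesis together with the trace theorem, and itself bounded by $C\{\|\varphi\|_{H^{2(p-1)}(\Sigma_0)}+\|\psi_1\|_{H^{2(p-1)}(\mathcal{T})}\}$. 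For the first term on the right I invoke the Lemma: for $|\alpha|=p$ with $\alpha_0=l\le p$ one has $\|\partial^\alpha v\|_{L^2(\Sigma_0)}\le C\{\|\varphi\|_{H^{p+l}(\Sigma_0)}+\|\psi_1\|_{H^{p+l}(\mathcal{T})}\}$ (again with $f=0$), and since $l\le p\le k$ the worst case is $p+l=2k$, which is precisely the regularity assumed in the hypothesis of the Corollary. Because $\psi_1$ also enters only through $H^{2p}(\mathcal{T})\subseteq H^{2k}(\mathcal{T})$, everything collapses to $C\{\|\varphi\|_{H^{2k}(\Sigma_0)}+\|\psi_1\|_{H^{2k}(\mathcal{T})}\}$. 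Summing over $p=0,1,\dots,k$ (the $p=0$ term being $\|v\|_{L^2}$, handled by Theorem~\ref{main-theorem} at the base level, or trivially by the boundary data) yields $\|v\|_{H^k(\mathcal{M}_1\cap\Omega_T)}\le C\{\|\varphi\|_{H^{2k}(\Sigma_0)}+\|\psi_1\|_{H^{2k}(\mathcal{T})}\}$.

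The step I expect to be the main obstacle is the careful bookkeeping of the loss of derivatives: each application of the Lemma trades a $\tau$-derivative on $\Sigma_0$ for two derivatives of the data, and one must check that iterating this — together with the trace theorem passing from $\Omega_T$ to $\Sigma_0$ and from $\mathcal{T}$ to $\Sigma_0\cap\mathcal{T}$ — never pushes the required regularity of $\varphi$ or $\psi_1$ above $2k$, and never requires regularity of $g^{ij}$ above $C^{2k-1}$ or of $\omega$ above $C^{2k-2}$. This is the reason the exponent jumps from $k$ on the left to $2k$ on the right, and getting the constant's dependence exactly as stated requires tracking these indices through the induction rather than just citing the earlier theorems as black boxes. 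A secondary point to be handled with care is that $\mathcal{M}_1\cap\Omega_T=\Omega_{T,0}$ is not literally one of the regions $\Omega_{T,\nu}$ with $\nu>0$ for which the energy identity \eqref{keyf} was derived with four smooth boundary pieces; one takes $\nu\to0$ and checks that the $H_\nu$-boundary energy $\int_{\mathcal{H}}hE_{\mathcal{H}}[\,\cdot\,,1,1]\,\mathrm{d}\mathcal{H}$ has a good sign and the limit is legitimate, which is implicit in Theorems~\ref{main-theorem}--\ref{main-theorem-high} but worth a sentence.
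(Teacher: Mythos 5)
Your proposal is correct and follows essentially the same route the paper intends: the corollary is assembled by combining Theorem~\ref{main-theorem-high} (with $f=0$, $a^i=0$) with the lemma bounding $\|\partial^\alpha v\|_{L^2(\Sigma_0)}$, taking the worst case $\alpha_0=l=|\alpha|=k$ so that $p+l=2k$, summing over all orders, and absorbing the weight $h$ via $e^{-qlT}\le h\le e^{q/R}$. Your additional remarks on the $\nu\to 0$ limit and on how the lower-order $\mathcal{T}$-traces enter the induction only make explicit what the paper leaves implicit.
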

\subsection{Energy estimates in $\mathcal{M}_2$}
In this subsection, we will get the energy estimates in $\mathcal{M}_2\cap\Omega_T$. We know the boundary of $\mathcal{M}_2\cap\Omega_T$ are $\Sigma_T\cap\mathcal{M}_2$, $\mathcal{H}$ and $\mathcal{I}$. In fact, the energy estimates are same as the last subsection.
\begin{theorem}\label{M-2-main-theorem}There exists $q_0$ and $l$ depending on $|g^{ij}|_{C^1(\mathcal{M}_2\cap\Omega_T)}$, $|\omega|_{L^\infty(\mathcal{M}_2\cap\Omega_T)}$,  such that, for any $q\geq q_0$, and $h=e^{-ql\tau+qz}$,
\begin{align}\begin{split}
&q^{\frac{1}{2}}||v||_{H_h^1(\Omega_T\cap\mathcal{M}_2)}
+||v||_{H_h^1(\Sigma_T\cap\mathcal{M}_2)}+||\partial_zv||_{L_h^2(\mathcal{I})}\\
\leq& C\{\int_{\mathcal{H}}hE_{\mathcal{H}}[v,1,1]\text{d}\mathcal{H}+||v||_{H_h^1(\mathcal{I})}
+||f||_{L_h^2(\mathcal{M}_2\cap\Omega_T)}\},
\end{split}\end{align}
where $C$ is a constant depending on $|g^{ij}|_{C^1(\mathcal{M}_2\cap\Omega_T)}$, $|\omega|_{L^\infty(\mathcal{M}_2\cap\Omega_T)}$ and $\Lambda$.
\end{theorem}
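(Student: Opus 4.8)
The plan is to repeat the weighted energy--momentum argument of Theorem~\ref{main-theorem}, this time on the region $\mathcal{M}_2\cap\Omega_T$, whose boundary splits into three regular pieces: the future slice $\Sigma_T\cap\mathcal{M}_2$ (with conormal $\nabla\tau=N_1$), the ingoing null cone $\mathcal{H}$ (through which energy enters from $\mathcal{M}_1$), and the timelike conformal boundary $\mathcal{I}=\{z=0\}$ carrying the Dirichlet datum $v|_{\mathcal{I}}=\psi_2$. As in Theorem~\ref{main-theorem} I would take the timelike field $Y=mN_1+N_2$ and the weight $h=e^{-ql\tau+qz}$, integrate $\mathrm{div}(hP[v,Y])$ over $\mathcal{M}_2\cap\Omega_T$ using \eqref{divf}, and apply Stokes. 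The single structural change is that the multiplier parameter $m$ must now be chosen \emph{small}, which is exactly the ``second version'' of the estimate noted in the remark after Theorem~\ref{main-theorem}: concretely $0<m<\tfrac12(-\Lambda/3)$, so that $m-\tfrac12 z^2V<0$ on a neighbourhood of $\mathcal{I}$; then $l$ is fixed large (so that $\nabla h$ is timelike and the bulk form \eqref{1} is coercive, using that $z^2V$ is bounded on $\mathcal{M}_2\cap\Omega_T$, part of the conformal $C^k$ condition), and finally $q$ is taken large in terms of $|g^{ij}|_{C^1}$ and $|\omega|_{L^\infty}$.

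With these choices the bulk term is bounded below exactly as in \eqref{eq-estimate-domain-integral}, by $\tfrac14 q\|v\|_{H_h^1(\mathcal{M}_2\cap\Omega_T)}^2$ modulo $\tfrac14\|f\|_{L_h^2}^2$, and the problem reduces to the three boundary fluxes. On $\Sigma_T\cap\mathcal{M}_2$ the flux $Q[v](N_1,Y)=m(\partial_zv)^2+\tfrac12 g^{AB}\partial_Av\partial_Bv+v^2$ is nonnegative and, for fixed $m>0$, controls $\|v\|_{H_h^1(\Sigma_T\cap\mathcal{M}_2)}^2$; it stays on the left. On $\mathcal{H}$ --- which lies on the opposite side of the region from the $\mathcal{M}_1$ computation --- the flux is $-\int_{\mathcal{H}}hE_{\mathcal{H}}[v,m,1]\,\mathrm{d}\mathcal{H}$, a source term, and since $E_{\mathcal{H}}[v,m,1]\le\max(m,1)E_{\mathcal{H}}[v,1,1]$ it is dominated by $\int_{\mathcal{H}}hE_{\mathcal{H}}[v,1,1]\,\mathrm{d}\mathcal{H}$, which was already bounded in the previous subsections (Theorem~\ref{main-theorem}, Corollary~\ref{M-1-coroll}). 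No initial-data ODE analysis is needed here: the role of $\Sigma_0$ is played by $\mathcal{H}$, and $E_{\mathcal{H}}$ contains only derivatives of $v$ tangential to $\mathcal{H}$, hence is directly furnished by the $\mathcal{M}_1$ estimate.

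The decisive step is the flux through $\mathcal{I}$. Since $\mathcal{I}$ bounds $\mathcal{M}_2\cap\Omega_T$ from the $z>0$ side, its contribution is $-\int_{\mathcal{I}}hQ[v](\nabla z,Y)\,\mathrm{d}\mathcal{I}$ with $Q[v](\nabla z,Y)$ given by \eqref{3a}. On $\mathcal{I}$ the asymptotics \eqref{asym-V}, \eqref{asym-U} together with conformal regularity force $z^2V=-\tfrac{\Lambda}{3}>0$ there (not merely an $O(z)$ quantity), so in $-Q[v](\nabla z,Y)$ the coefficient $\tfrac12 z^2V(\tfrac12 z^2V-m)$ of the a priori unknown transversal derivative $(\partial_zv)^2$ is strictly positive once $m<\tfrac12(-\Lambda/3)$; the cross terms $z^2V\,\partial_\tau v\,\partial_zv$ and $z^2Vg^{1A}\partial_Av\,\partial_zv$ are absorbed by Cauchy--Schwarz with a small parameter, and the remaining pieces $-(\partial_\tau v+g^{1A}\partial_Av)^2-(m-\tfrac12 z^2V)(\tfrac12 g^{AB}\partial_Av\partial_Bv+v^2)$ involve only derivatives of $v$ tangential to $\mathcal{I}$, hence are bounded by $\|v\|_{H_h^1(\mathcal{I})}^2$, i.e.\ by $\psi_2$. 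Thus the $\mathcal{I}$-flux yields $c\|\partial_zv\|_{L_h^2(\mathcal{I})}^2$ on the left, at the cost of $C\|v\|_{H_h^1(\mathcal{I})}^2$ on the right. Summing the three fluxes and taking square roots gives the asserted inequality, with $C$ depending on $|g^{ij}|_{C^1(\mathcal{M}_2\cap\Omega_T)}$, $|\omega|_{L^\infty(\mathcal{M}_2\cap\Omega_T)}$ and $\Lambda$, the last entering through the admissible range of $m$ and the lower bound on the $(\partial_zv)^2$-coefficient.

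I expect the main obstacle to be precisely this sign accounting at the timelike boundary $\mathcal{I}$: one is forced to take $m$ \emph{small} rather than large, the opposite of what worked for the timelike boundary $\mathcal{T}$ in $\mathcal{M}_1$, because $\mathcal{I}$ and $\mathcal{T}$ sit at opposite ends of the $z$-range and so the induced conormal --- and with it the sign of the $(\partial_zv)^2$ contribution to the flux --- is reversed; and the whole scheme closes only because $z^2V$ does not degenerate at $\mathcal{I}$, which is exactly where the anti-de Sitter condition $\Lambda<0$ is used. The remaining verifications --- coercivity of the bulk form for large $l,q$, nonnegativity of the $\mathcal{H}$-flux via the positivity lemma for $Q[\phi](L,Y)$ with $b_1=m$, $b_2=1$, and the trace identification of the tangential terms on $\mathcal{I}$ with $\psi_2$ --- are routine and parallel to the proof of Theorem~\ref{main-theorem}.
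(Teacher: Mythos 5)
Your proposal is correct and follows essentially the same route as the paper's proof: the same multiplier $Y=mN_1+N_2$ with weight $h=e^{-ql\tau+qz}$ on $\mathcal{M}_2\cap\Omega_T$, with the $\mathcal{H}$-flux moved to the right-hand side as a source and the decisive choice of $m$ \emph{small} (the paper takes $m=-\Lambda/24$, satisfying your condition $m<-\Lambda/6$) so that the $(\partial_zv)^2$ coefficient $\tfrac12 z^2V(\tfrac12 z^2V-m)=\tfrac{\Lambda}{6}(\tfrac{\Lambda}{6}+m)$ is positive on $\mathcal{I}$, the cross term being absorbed by Cauchy--Schwarz exactly as in the paper. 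Your diagnosis of where $\Lambda<0$ and the non-degeneracy $z^2V\to-\Lambda/3$ enter matches the paper's computation precisely.
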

\begin{proof} The prove is similar to Theorem \ref{main-theorem}. We also take $Y=mN_1+N_2$ and $h=e^{-ql\tau+qz}$, then
\begin{align}\label{keyf-2}\begin{split}
&\int_{\mathcal{M}_2\cap \Omega_T}\{Q[v](\nabla h, Y)
+\frac{1}{2}hQ[v]_{\alpha\beta}\ ^{(Y)}\pi^{\alpha\beta}-2g(Y, \nabla v)hv\\
&\quad\qquad\qquad +(\mathcal{L}v-a^{\alpha}\partial_{\alpha}v-\omega v)h(Yv)\}d\Omega\\
=& \int_{\mathcal{M}_2\cap\Sigma_T}hQ[v](\nabla\tau, Y)\sqrt{\gamma}dzd\Sigma
-\int_{\mathcal{H}}hE_\mathcal{H}[v,m,1] \text{d}\mathcal{H}
-\int_{\mathcal{I}}hQ[v](\nabla z, Y)\sqrt{\gamma}d\tau d\Sigma.
\end{split}\end{align}
The domain integral and the boundary term on $\Sigma_T\cap\mathcal{M}_2$  can be analysis by the same way in Theorem \ref{main-theorem}, and the boundary term on $\mathcal{H}$ has controlled by Theorem \ref{main-theorem-high}.  Hence, we only need to analysis the boundary term on $\mathcal{I}$. In fact, on $\mathcal{I}$, $N_2=\partial_\tau-\frac{\Lambda}{6}\partial_z$, and then,
\begin{align}\begin{split}
-Q[v](\nabla z, Y)&=(N_2v)^{2}+\frac{\Lambda}{6}m(\partial_{z}v)^{2}
+(m+\frac{\Lambda}{6})[\frac{1}{2}g^{AB}\partial_{A}v\partial_{B}v+v^{2}]\\
&=(\partial_\tau v)^2+\frac{\Lambda}{6}(\frac{\Lambda}{6}+m)(\partial_z v)^2
+(m+\frac{\Lambda}{6})[\frac{1}{2}g^{AB}\partial_{A}v\partial_{B}v+v^{2}]
-\frac{\Lambda}{3}\partial_\tau v\partial_z v\\
&\geq -7(\partial_\tau v)^2+\frac{\Lambda}{6}(\frac{\Lambda}{12}+m)(\partial_z v)^2
+(m+\frac{\Lambda}{6})[\frac{1}{2}g^{AB}\partial_{A}v\partial_{B}v+v^{2}].
\end{split}\end{align}
Hence, we take $m=-\frac{\Lambda}{24}$, then we can get the desired result.
\end{proof}

From Theorem \ref{M-2-main-theorem}, and by the method in Theorem \ref{main-theorem-high}, we have
\begin{theorem}\label{M-2-mian-theorem-high}
For any $\alpha$ with $|\alpha|=k\geq2$, there exists $q_0$ depending on $k$, $|g^{ij}|_{C^1(\mathcal{M}_2\cap\Omega_T)}$, $|\omega|_{L^\infty(\mathcal{M}_2\cap\Omega_T)}$, and $l$ depending on $|g^{ij}|_{C^1(\mathcal{M}_2\cap\Omega_T)}$, $|\omega|_{L^\infty(\mathcal{M}_2\cap\Omega_T)}$, such that, for any $q\geq q_0$, and $h=e^{-ql\tau+qz}$,
\begin{align}\begin{split}
&\sum_{|\alpha|=k}[q^{\frac{1}{2}}||\partial^\alpha v||_{H_h^1(\Omega_T\cap\mathcal{M}_2)}
+||\partial^\alpha v||_{L_h^2(\mathcal{I})}]\\
\leq& C\{\sum_{|\alpha|\leq k}\int_{\mathcal{H}}hE_{\mathcal{H}}[\partial^\alpha v,1,1]\text{d}\mathcal{H}
+||\psi_2||_{H_h^k(\mathcal{I})}+||f||_{H_h^{k-1}(\mathcal{M}_2\cap\Omega_T)}\},
\end{split}\end{align}
where $C$ is a constant depending on $k$, $|g^{ij}|_{C^k(\mathcal{M}_2\cap\Omega_T)}$, $|\omega|_{C^{k-1}(\mathcal{M}_2\cap\Omega_T)}$ and $\Lambda$.
\end{theorem}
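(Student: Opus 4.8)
The plan is to mimic the strategy of Theorem \ref{main-theorem-high}, bootstrapping the $H^1$-energy estimate of Theorem \ref{M-2-main-theorem} through the differentiated equations. First I would apply $\partial^\alpha$ to the wave equation \eqref{wave-equ} to obtain, exactly as in \eqref{higher-wave-equ}, the identity $\Box_g(\partial^\alpha v)=\sum_{|\beta|=k+1}c_{\alpha\beta}\partial^\beta v+f_\alpha$ with $f_\alpha=\sum_{|\beta|\le k}c_{\alpha\beta}\partial^\beta v$, where now the source also carries the contribution of $\partial^{\alpha}f$ from the inhomogeneous equation \eqref{general-wave-eq}. Feeding this into Theorem \ref{M-2-main-theorem} with $\phi=\partial^\alpha v$ gives, for $q>q_0$ and $h=e^{-ql\tau+qz}$,
\begin{align*}
&q^{\frac12}\|\partial^\alpha v\|_{H_h^1(\Omega_T\cap\mathcal{M}_2)}+\|\partial^\alpha v\|_{L_h^2(\mathcal{I})}\\
\le&\ C\Big\{\int_{\mathcal{H}}hE_{\mathcal{H}}[\partial^\alpha v,1,1]\,\mathrm{d}\mathcal{H}+\|\partial^\alpha v\|_{H_h^1(\mathcal{I})}+\sum_{|\beta|=k+1}\|\partial^\beta v\|_{L_h^2(\Omega_T\cap\mathcal{M}_2)}+\|f_\alpha\|_{L_h^2(\Omega_T\cap\mathcal{M}_2)}\Big\}.
\end{align*}
The term $\int_{\mathcal{H}}hE_{\mathcal{H}}[\partial^\alpha v,1,1]\,\mathrm{d}\mathcal{H}$ is already controlled by Theorem \ref{main-theorem-high} (this is the role of $\mathcal{H}$ as the interface between $\mathcal{M}_1$ and $\mathcal{M}_2$), and $\|f_\alpha\|$ is a lower-order term handled by downward induction on $k$.

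The two genuinely new points are the $\mathcal{I}$-boundary term $\|\partial^\alpha v\|_{H_h^1(\mathcal{I})}$ and the order-$(k+1)$ bulk term. For the latter, the trick is the same as in \eqref{ineq-1}--\eqref{ineq-4}: among the $\partial^\beta v$ with $|\beta|=k+1$, those of the form $\partial_i\partial^\alpha v$ with $i\neq 1$ (i.e. not a pure $z$-derivative beyond those already in $\alpha$) are tangential to $\mathcal{I}$, so they are controlled by the $L_h^2(\mathcal{I})$-trace and the bulk norm of $\partial^\alpha v$; the only dangerous direction is the transversal $z$-derivative. One uses the equation \eqref{general-wave-eq} itself, in which $\partial_z^2 v$ (equivalently $\partial_\tau\partial_z v$ via the principal part $2\partial_\tau\partial_z+z^2V\partial_z^2+\cdots$) is expressed in terms of lower-order and tangential derivatives, to trade each extra $z$-derivative for tangential ones at the cost of a fixed finite number of iterations; inducting on $\alpha_1$ as in \eqref{ineq-2}--\eqref{ineq-3} closes this. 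Choosing $q$ large then absorbs the resulting $\sum_{|\beta|=k+1}\|\partial^\beta v\|_{L_h^2}$ on the left.

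For the $\mathcal{I}$-boundary data I would use the Dirichlet condition $v|_{\mathcal{I}}=\psi_2$: all derivatives of $\partial^\alpha v$ on $\mathcal{I}$ that are tangential to $\mathcal{I}$ (i.e. in $\tau$ and the $S^2$-directions) are derivatives of $\psi_2$, hence bounded by $\|\psi_2\|_{H_h^k(\mathcal{I})}$; the transversal $z$-derivatives on $\mathcal{I}$ are recovered, as in the restriction-to-$\Sigma_0$ argument of Lemma on \eqref{initdata}, by restricting the equation to $\mathcal{I}$ and solving the resulting transport ODE in the transversal variable — here the hypersurface $\mathcal{I}=\{z=0\}$ is timelike (Remark after \eqref{eq-metric-g-bar}), and $N_2=\partial_\tau-\tfrac{\Lambda}{6}\partial_z$ there provides the good transversal direction, so one reads off $\partial_z v|_{\mathcal{I}}$, then $\partial_z^2 v|_{\mathcal{I}}$, etc., with the familiar loss of one $\tau$-derivative per step. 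Assembling these gives $\|\partial^\alpha v\|_{H_h^1(\mathcal{I})}\le C\|\psi_2\|_{H_h^k(\mathcal{I})}+(\text{lower order})$, and substituting back yields the claimed estimate. The main obstacle, as in Theorem \ref{main-theorem-high}, is the bookkeeping of the order-$(k+1)$ tangential-versus-transversal split on $\mathcal{I}$ together with the ODE recovery of transversal traces: one must check that the iteration in $\alpha_1$ terminates and that every term produced lands in a norm already controlled by $\psi_2$, the $\mathcal{H}$-energy, or a strictly lower-order quantity, so that the downward induction on $k$ genuinely closes.
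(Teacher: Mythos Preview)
Your overall architecture---differentiate, apply Theorem~\ref{M-2-main-theorem} to $\partial^\alpha v$, sum over $|\alpha|=k$, and take $q$ large---is exactly what the paper intends (its ``proof'' is the single line ``by the method in Theorem~\ref{main-theorem-high}''). Two points in your execution are off, though, and one of them matters.

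First, the handling of the bulk term $\sum_{|\beta|=k+1}\|\partial^\beta v\|_{L_h^2(\Omega_T\cap\mathcal{M}_2)}$ is simpler than you make it. Every $\beta$ with $|\beta|=k+1$ is $\partial_i\partial^\alpha$ for some $|\alpha|=k$, so summing the left side $q^{1/2}\|\partial^\alpha v\|_{H_h^1}$ over all $|\alpha|=k$ already dominates the full sum $\sum_{|\beta|=k+1}\|\partial^\beta v\|_{L_h^2}$; large $q$ absorbs it outright. There is no ``dangerous direction'' in the bulk and no need to invoke the equation to trade $z$-derivatives there. Your digression about expressing $\partial_z^2 v$ through the principal part is a red herring for this step.

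Second---and this is the real issue---your proposed ODE recovery of $\partial_z^j v|_{\mathcal{I}}$ is both unnecessary and would not give the stated result. Restricting the equation to $\mathcal{I}=\{z=0\}$ does \emph{not} yield a transport ODE for $\partial_z v$: since $z^2V|_{z=0}=-\Lambda/3\neq 0$, the restriction relates $\partial_\tau\partial_z v$, $\partial_z^2 v$, and tangential second derivatives simultaneously, so you cannot read off $\partial_z v|_{\mathcal{I}}$ from $\psi_2$ alone. Moreover, any iteration of this type would cost tangential regularity of $\psi_2$ at each step, whereas the theorem demands only $\psi_2\in H^k$, with \emph{no} loss. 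The paper's route is the direct analogue of \eqref{ineq-2}--\eqref{ineq-3} with $\mathcal{T}$ replaced by $\mathcal{I}$: Theorem~\ref{M-2-main-theorem} puts $\|\partial_z\partial^\alpha v\|_{L_h^2(\mathcal{I})}$ on the \emph{left}, so applying it with $|\alpha|=k$ and $\alpha_1=r-1$ controls the $|\alpha|=k+1$, $\alpha_1=r$ traces on $\mathcal{I}$ in terms of the $\alpha_1=r-1$ ones; iterating down to $\alpha_1=0$ leaves only tangential derivatives of $\psi_2$, bounded by $\|\psi_2\|_{H_h^{k}(\mathcal{I})}$. Note also that the $\mathcal{M}_2$ argument is actually \emph{simpler} than Theorem~\ref{main-theorem-high}: there is no analogue of the $\Sigma_0$ Lemma~\eqref{initdata} to prove, because the $\mathcal{H}$-energies $\int_{\mathcal{H}} hE_{\mathcal{H}}[\partial^\alpha v,1,1]\,\mathrm{d}\mathcal{H}$ appear verbatim on the right of the claimed estimate and are supplied by the $\mathcal{M}_1$ side.
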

Combine Theorem \ref{main-theorem-high} and Theorem \ref{M-2-mian-theorem-high}, we have
\begin{corollary}
Suppose $v$ is a solution of \eqref{wave-equ}-\eqref{wave-boundary}, then, for any $k\geq1$, we have
\begin{align}\begin{split}
||v||_{H^k(\Omega_T)}\leq C\{||\varphi||_{H^{2k}(\Sigma_0)}+||\psi_1||_{H^{2k}(\mathcal{T})}
+||\psi_2||_{H^k(\mathcal{I})}\},
\end{split}\end{align}
where C is a constant depending on $|g^{ij}|_{C^{2k-1}(\Omega_T)}$, $|\omega|_{C^{2k-2}(\Omega_T)}$, $k$, $\Lambda$, $T$, and $R$.
\end{corollary}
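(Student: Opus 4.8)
The plan is to exhaust $\Omega_T$ by its two pieces $\mathcal{M}_1\cap\Omega_T$ and $\mathcal{M}_2\cap\Omega_T$, which meet along the null hypersurface $\mathcal{H}$, to apply the higher-order estimates already established on each piece, and to use the boundary integrals over $\mathcal{H}$ as the bridge between them. Two preliminary observations simplify matters. First, the equation under consideration is \eqref{wave-equ}, so in the notation of \eqref{general-wave-eq} one has $a^i\equiv 0$ and $f\equiv 0$, and every source term on the right-hand sides of the cited theorems disappears. Second, the weight $h=e^{-ql\tau+qz}$ is evaluated on the bounded set $\{\tau\in[0,T],\ z\in[0,1/R]\}$, while $q$ and $l$ are fixed constants (chosen in Theorem~\ref{main-theorem-high} and Theorem~\ref{M-2-mian-theorem-high} in terms of the coefficients and $k$); hence $e^{-qlT}\le h\le e^{q/R}$ on $\Omega_T$ and on each of its faces, so the weighted norms $\|\cdot\|_{H^j_h}$ and $\|\cdot\|_{L^2_h}$ are equivalent to the unweighted $\|\cdot\|_{H^j}$ and $\|\cdot\|_{L^2}$, with the equivalence constants absorbed into $C$. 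From here on I would work with unweighted norms.

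On $\mathcal{M}_1\cap\Omega_T$, Corollary~\ref{M-1-coroll} gives at once $\|v\|_{H^k(\mathcal{M}_1\cap\Omega_T)}\le C\{\|\varphi\|_{H^{2k}(\Sigma_0)}+\|\psi_1\|_{H^{2k}(\mathcal{T})}\}$. In parallel, Theorem~\ref{main-theorem} (for the order-zero energy $E_{\mathcal{H}}[v,1,1]$) together with Theorem~\ref{main-theorem-high} applied with parameter $p=|\alpha|-1$ for each multi-index with $1\le|\alpha|\le k$ — using once more that $f\equiv 0$ and that $2p+2\le 2k$ — yields control of all the null-cone energies, namely $\sum_{|\alpha|\le k}\int_{\mathcal{H}}h\,E_{\mathcal{H}}[\partial^\alpha v,1,1]\,\mathrm{d}\mathcal{H}\le C\{\|\varphi\|_{H^{2k}(\Sigma_0)}+\|\psi_1\|_{H^{2k}(\mathcal{T})}\}$. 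This quantity is the only piece of information that will be carried into the second region.

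On $\mathcal{M}_2\cap\Omega_T$ I would invoke Theorem~\ref{M-2-main-theorem} when $k=1$ and Theorem~\ref{M-2-mian-theorem-high} with $|\alpha|=k'$ for each $2\le k'\le k$; since $v|_{\mathcal{I}}=\psi_2$, these bound $\|v\|_{H^k(\mathcal{M}_2\cap\Omega_T)}$ by $\sum_{|\alpha|\le k}\int_{\mathcal{H}}h\,E_{\mathcal{H}}[\partial^\alpha v,1,1]\,\mathrm{d}\mathcal{H}+\|\psi_2\|_{H^k(\mathcal{I})}$ (the source terms again vanishing). Inserting the bound from the previous paragraph gives $\|v\|_{H^k(\mathcal{M}_2\cap\Omega_T)}\le C\{\|\varphi\|_{H^{2k}(\Sigma_0)}+\|\psi_1\|_{H^{2k}(\mathcal{T})}+\|\psi_2\|_{H^k(\mathcal{I})}\}$. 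Finally $\Omega_T=(\mathcal{M}_1\cap\Omega_T)\cup(\mathcal{M}_2\cap\Omega_T)$ up to the measure-zero hypersurface $\mathcal{H}$, so $\|v\|_{H^k(\Omega_T)}^2\le\|v\|_{H^k(\mathcal{M}_1\cap\Omega_T)}^2+\|v\|_{H^k(\mathcal{M}_2\cap\Omega_T)}^2$, and adding the two estimates completes the argument.

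Everything past quoting the four theorems is bookkeeping, and the one point deserving attention is the regularity count across $\mathcal{H}$: the $\mathcal{M}_1$ step is lossy — it costs $H^{2k}$ of $\varphi$ and $\psi_1$ to control the order $\le k$ energies on $\mathcal{H}$, which themselves involve up to $k+1$ derivatives of $v$ tangent to $\mathcal{H}$ — whereas the $\mathcal{M}_2$ step is loss-free in the $\mathcal{H}$- and $\mathcal{I}$-data. One must therefore check that the energies $E_{\mathcal{H}}[\partial^\alpha v,1,1]$, $|\alpha|\le k$, needed to run Theorem~\ref{M-2-mian-theorem-high} at level $k$ are exactly those produced by Theorem~\ref{main-theorem-high} at parameter $p=k-1$, whose data side is only $H^{2k}$, so that no further derivative is lost in passing from $\mathcal{M}_1$ to $\mathcal{M}_2$. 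This index matching, together with the weight equivalence on the bounded domain, is the real substance of the proof; there is no serious obstacle beyond it.
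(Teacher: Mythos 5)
Your proposal is correct and follows essentially the same route as the paper: the text obtains the corollary by combining Theorem~\ref{main-theorem-high} (which controls the energies $E_{\mathcal{H}}[\partial^\alpha v,1,1]$ on $\mathcal{H}$ by $\|\varphi\|_{H^{2k}}$ and $\|\psi_1\|_{H^{2k}}$) with Theorem~\ref{M-2-mian-theorem-high}, and its stated proof consists precisely of the weight-equivalence observation $e^{-qlT}\leq e^{-ql\tau+qz}\leq e^{q/R}$ that you use to pass from weighted to unweighted norms. Your write-up simply makes explicit the index matching across $\mathcal{H}$ that the paper leaves implicit.
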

\begin{proof}
We have used $e^{-qlT}\leq e^{-ql\tau+qz}\leq e^{\frac{q}{R}}$.
\end{proof}
\section{Existence of solutions}
In this section, we will consider the existence of solution of \eqref{wave-equ}-\eqref{wave-boundary}. We first consider the local existence of null-timelike problems.  Just as the same as in \cite{LinZhang2017}, we first proving the existence for the analytic case following the method in \cite{Duff1958}, and then for general case by approximations. For convenience, we consider the existence near $\Sigma_0\cap\mathcal{T}$.
\begin{theorem}[\underline{Local existence of null-timelike problems}]\label{existence-null-timelike}\footnote{The prove same as in \cite{LinZhang2017}}
For linear wave equations \eqref{wave-equ}, with $g^{ij}\in C^{2k-1}$, $\omega\in C^{2k-2}$ and $v|_{\Sigma_0}=\varphi\in H^{2k}(\Sigma_0)$, $v|_{\mathcal{T}}=\psi\in H^{2k}(\mathcal{T})$. Then, there exits a point $O\in \Omega$, and $v\in H^k(\mathcal{J}^-(O)\cap\Omega)$ solve this null-timelike boundary problem.
\end{theorem}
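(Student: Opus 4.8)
The plan is to follow the classical Duff scheme \cite{Duff1958} exactly as in \cite{LinZhang2017}, adapted to the present null-timelike configuration near the corner $\Sigma_0\cap\mathcal{T}$. First I would reduce the problem to the analytic category: assume temporarily that $g^{ij}$, $\omega$, $a^i$, $\varphi$, and $\psi$ are real-analytic in a neighborhood of a point on $\Sigma_0\cap\mathcal{T}$. The wave equation \eqref{wave-equ} in the coordinates $(\tau,z,x^A)$ has principal part $2\partial_\tau\partial_z + z^2V\partial_z^2 + g^{AB}\partial_A\partial_B + \dots$; along $\Sigma_0=\{\tau=0\}$ the coefficient of $\partial_z\partial_\tau$ is the nonvanishing constant $2$, so the equation is non-characteristic in the $\tau$-direction there, while $\{\tau=0\}$ itself is null. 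One then sets up a Cauchy–Kovalevskaya–type iteration: on $\Sigma_0$ the datum $v|_{\Sigma_0}=\varphi$ together with the ODE obtained by restricting \eqref{wave-equ} to $\Sigma_0$ (this is precisely equation \eqref{eq-ODE-tau}, viewed as a transport equation in $z$ for $\partial_\tau v$, closed using the boundary value $\psi$ on $\Sigma_0\cap\mathcal{T}$) determines all the $\tau$-derivatives of $v$ on $\Sigma_0$ recursively; symmetrically, $\psi$ on $\mathcal{T}$ together with restrictions of the equation determines the $z$-derivatives on $\mathcal{T}$. With all formal Taylor coefficients at the corner determined and majorized in the standard way, the power series converges in a neighborhood $\mathcal{J}^-(O)\cap\Omega$ of a point $O$, giving an analytic solution there.

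Next I would remove the analyticity assumption by approximation. Choose sequences of analytic data $g^{ij}_n\to g^{ij}$ in $C^{2k-1}$, $\omega_n\to\omega$ in $C^{2k-2}$ (and similarly for $a^i$), $\varphi_n\to\varphi$ in $H^{2k}(\Sigma_0)$, $\psi_n\to\psi$ in $H^{2k}(\mathcal{T})$. For each $n$ the analytic theory produces a solution $v_n$ on a common domain $\mathcal{J}^-(O)\cap\Omega$ — here one must check the domain of convergence can be taken uniform in $n$, which follows because the majorant radius depends only on the $C^{2k-1}$, $C^{2k-2}$ norms of the coefficients and the data norms, all uniformly bounded along the sequence. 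The energy estimate of Corollary \ref{M-1-coroll} (applied on the truncated domain $\mathcal{J}^-(O)\cap\Omega$, whose boundary has the same null-timelike structure) gives
$$\|v_n\|_{H^k(\mathcal{J}^-(O)\cap\Omega)}\leq C\{\|\varphi_n\|_{H^{2k}(\Sigma_0)}+\|\psi_n\|_{H^{2k}(\mathcal{T})}\},$$
with $C$ uniform in $n$, and the same estimate applied to differences $v_n-v_m$ (which solve the wave equation with source coming from the coefficient differences, controlled by the energy inequality for \eqref{general-wave-eq}) shows $\{v_n\}$ is Cauchy in $H^k$. The limit $v\in H^k(\mathcal{J}^-(O)\cap\Omega)$ solves \eqref{wave-equ} with the prescribed boundary values, since convergence in $H^k$ with $k\geq1$ suffices to pass to the limit in the equation distributionally and in the traces.

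The main obstacle is the first step: making the Cauchy–Kovalevskaya argument genuinely work at the corner $\Sigma_0\cap\mathcal{T}$, where two distinct boundary hypersurfaces — one null, one timelike — meet. One has to verify that the Taylor coefficients determined from $\Sigma_0$ and those determined from $\mathcal{T}$ are consistent at the corner (a compatibility condition on $\varphi$ and $\psi$, automatically satisfied when one simply prescribes the full formal series from, say, $\varphi$ and uses $\psi$ only to close the transport ODEs in the normal direction), and that the mixed recursion still admits a convergent majorant. Since this is carried out in detail in \cite{LinZhang2017, Duff1958} for the Minkowski-type model and the present principal symbol differs only by the smooth lower-order and $z^2V$ terms, which do not affect the non-characteristic direction at $\Sigma_0$, the argument transfers verbatim; I would simply cite it. The second obstacle, minor by comparison, is the uniformity of the existence domain under approximation, handled as indicated above by tracking the dependence of the majorant radius on the coefficient norms.
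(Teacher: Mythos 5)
Your proposal is correct and follows essentially the same route as the paper's proof: reduction to the analytic case near $\Sigma_0\cap\mathcal{T}$, a power-series recursion in $\tau$ whose coefficients are closed by transport ODEs in $z$ with data from $\psi$ at the corner, convergence via Duff's majorant construction, and removal of analyticity by approximating coefficients and data and invoking the $H^k$ energy estimates of Corollary \ref{M-1-coroll} to pass to the limit. The only difference is one of explicitness (the paper writes out the majorant system and its associated linear ODEs, while you cite \cite{Duff1958,LinZhang2017} for that step, just as the paper's own footnote does), so no substantive gap remains.
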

\begin{proof}
We write the equation in \eqref{wave-equ} in the form
\begin{equation}
2\partial_{z\tau}v+z^{2}V\partial_{zz}v+2g^{1A}\partial_{zA}v
+g^{AB}\partial_{AB}v+a^{i}\partial_{i}v+b v=0.
\end{equation}

We first assume that $g^{ij}, a^{i}, b,$ and $\varphi,\psi$ are real analytic in some open neighborhood $\mathcal{U}$ of $\Sigma_0\cap\mathcal{T}$  and hence can be expanded
as a power series in $\tau$. For example, we have
\begin{align*}
\psi=\sum_{i=0}^{\infty}\psi_{i}( \theta)\tau^{i}.
\end{align*}
Here and hereafter, we denote by $\theta=(x^2, x^3)$, coordinates on $S^2$. We define
\begin{align*}
&u^{0}=v=\sum_{i=0}^{\infty} u_{i}^{0}(z, \theta)\tau^{i},\\
&u^{1}=\partial_{z}v=\sum_{i=0}^{\infty} u_{i}^{1}(z, \theta)\tau^{i},\\
&u^{A}=\partial_{A}v=\sum_{i=0}^{\infty} u_{i}^{A}(z, \theta)\tau^{i}, \ A=2,3, \end{align*}
and
\begin{align*}
w=\partial_{\tau}v=\sum_{i=0}^{\infty} w_{i}(z, \theta)\tau^{i}.
\end{align*}
For $u^0, u^1, u^2, u^3$, and $ w$, we have
\begin{align*}
\partial_\tau u^{0}&=w,\\
2\partial_\tau u^{1}&=-z^{2}V\partial_z u^{1}-2g^{1A}\partial_z u^{A}
-g^{AB}\partial_B u^{A}+\bar{a}^{i}u^{i}+cw,\\
\partial_\tau u^{A}&=\partial_A w,\end{align*}
and
\begin{align*}
2\partial_z w=-z^{2}V\partial_z u^{1}-2g^{1A}\partial_z u^{A}
-g^{AB}\partial_B u^{A}+\bar{a}^{i}u^{i}+cw.
\end{align*}
Therefore,
\begin{align}\label{eq-ODE-system-2}\begin{split}
(i+1)u_{i+1}^{0}&=w_{i},\\
2(i+1)u_{i+1}^{1}&=\sum_{k\leq i}
L_{k}[\partial u_{k}^{0}, \partial u_{k}^{1}, \partial u_{k}^{2}, \partial u_{k}^{3}, w_{k}]
,\\
(i+1)u_{i+1}^{A}&=\partial_A w_{i},
\end{split}\end{align}
and
\begin{align}\label{eq-ODE-system-1}
2\partial_z w_{i}+d w_{i}
=\sum_{k\leq i}L_{k}[\partial u_{k}^{0}, \partial u_{k}^{1}, \partial u_{k}^{2}, \partial u_{k}^{3}]
+\sum_{k\leq i-1}L_{k}[w_{k}].\end{align}
Note
$$u_{0}^{0}=\varphi, \quad u_{0}^{1}=\partial_z \varphi, \quad u_{0}^{A}=\partial_A \varphi.$$
The equation \eqref{eq-ODE-system-1} is an ODE of $w_{i}$ with respect to $z$
and the initial value is given by
$$w_{i}(z_0, \cdot)=(i+1)\psi_{i+1}\quad\text{on }S^2.$$
For some $i\ge 0$, assume we already know $u_0^l$, $\cdots$, $u_i^l$, for $l=0, 1, 2, 3$,
and $w_0$, $\cdots$, $w_{i-1}$, then we can find $w_i$ by solving \eqref{eq-ODE-system-1}
and find $u_{i+1}^{l}$, for $l=0, 1, 2, 3$,
by  \eqref{eq-ODE-system-2}.

For simplicity, we assume
$$u_0^0=u_0^1=u_0^2=u_0^3=0,\quad w_0|_{z=z_0}=0.$$
Otherwise, we set
$$\tilde{u}^{0}=u^{0}-\varphi, \ \tilde{u}^{1}=u^{1}-\partial_z \varphi,\
\tilde{u}^{A}=u^{A}-\partial_A \varphi,\ \tilde{w}=w-\partial_\tau\psi.$$

For some $M>0$ and $\rho>0$, we define
$$F(s)=\frac{M}{1-\frac{s}{\rho}}.$$
We now consider a given point on $\Sigma_0$, say $(0,z_{*},0,0)$ with $z_*>0$.
Set
$$s=a^2\tau+a(z-z_{*})+x_2+x_3.$$
In a neighborhood of $(0,z_{*},0,0)$,
take $M>0, \rho>0$, and $a>1$, such that
the function
\begin{equation*}
F(s)
=\frac{M}{1-\rho^{-1}\big(a^2\tau+a(z-z_{*})+x_2+x_3\big)}
\end{equation*}
is a majorizing function of $g^{ij}$, $\bar{a}^{i}$, $c$, $f$ and $\varphi, \psi$. Then,
\begin{align}\label{eq-Majorizing-system}\begin{split}
\partial_\tau u^{i}&=F(s)\{\partial_z u^{1}
+\sum_{A,B=2,3}[\partial_z u^{A}+\partial_B u^{A}
+\partial_A w]+\sum_{j=0}^{3}u^{j}+w+1\},\\
\partial_z w&=F(s)\{\partial_z u^{1}
+\sum_{A,B=2,3}[\partial_z u^{A}+\partial_B u^{A}]+\sum\limits_{j=0}^{3}u^{j}+w+1\}
\end{split}\end{align}
forms a majorizing system. We now treat $s$ as an independent variable.
To construct a special solution $u^{l}=U(s), l=0,1,2,3$, and $w=W(s)$ of \eqref{eq-Majorizing-system},
we consider a system of linear ordinary differential equations given by
\begin{align}\label{eq-Majorizing-ODE}\begin{split}
\{a^2-F(s)(3a+4)\}\frac{d U}{d s}-2F(s)\frac{d W}{d s}&=F(s)(4U+W+1),\\
-F(s)(3a+4)\frac{d U}{d s}+a\frac{d W}{d s}&=F(s)(4U+W+1),
\end{split}\end{align}
with $U(0)=W(0)=0$. Take $\lambda$ small such that
\begin{equation*}
\begin{bmatrix}
a^2-F(s)(3a+4)&-2F(s)\\
-F(s)(3a+4)&a
\end{bmatrix}
\end{equation*}
is positive definite. Then, we can solve \eqref{eq-Majorizing-ODE}
and its solutions $U$ and $W$ are real analytic in the domain of $F(s)$.
Therefore, the domain where $u^{i}$ and $ w$ exist and are real analytic is the same as the domain
where all coefficients and initial values are real analytic. Similarly as in \cite{Duff1958},
by $a^2-F(s)(3a+4)>0$, $-2F(s)<0$, $-F(s)(3a+4)<0$, $a>0$,
the coefficients in the series of $U(s)$ and $W(s)$ are nonnegative, provided that $U(0)$ and $W(0)$ are $0$.
This proves the existence of an analytic solution $v$ in $\mathcal{U}$.

For some point $O\in\Omega$, such that $\mathcal{J}^-(O)\cap\Omega\in\mathcal{U}$, we find analytic sequences of $g_j$ and $\omega_j$, such that $\lim_{j\to\infty}g_j=g$ and $\lim_{j\to\infty}\omega_j=\omega$ uniformly in $C^{2k-1}(\overline{\mathcal{J}^-(O)\cap\Omega})$-norm. And
 we can find sequences of polynomials $P^{j}$ and $Q^{j}$, such that $\lim_{j\to\infty}P^j=\varphi$ in $H^{2k}(\overline{\mathcal{J}^-(O)\cap\Sigma_0})$-norm, and $\lim_{j\to\infty}Q^j=\psi$ in $H^{2k}(\overline{\mathcal{J}^-(O)\cap\mathcal{T}})$-norm.

Denote by $v^{j}$ the solution of $\Box_{g_j}v_j+\omega_jv_j=0$,
with the initial value and the boundary value given by $P^{j}$ and $Q^{j}$, respectively.
By the $H^{k}$-estimates provided by Corollary \ref{M-1-coroll},
we find that the $v^{j}$ converges, as $j\rightarrow\infty$, to a solution
$v\in H^k(\mathcal{J}^-(O)\cap\Omega)$ of \eqref{wave-equ}
with the initial value and the boundary value given by $\varphi$ and $ \psi$, respectively.
\end{proof}
We also need the local existence theorem for characteristic initial value problem for hyperbolic equations by Rendall \cite{Rendall1990},
\begin{theorem}[\underline{Local existence of characteristic initial value problem}]\label{existence-chara}
Let $\mathcal{N}_1$, $\mathcal{N}_2$ be the transversely intersecting null hypersurfaces with respect to $g$, $\Omega$ be the region bounded by $\mathcal{N}_1$, $\mathcal{N}_2$.
Let $v|_{\mathcal{N}_1}=\varphi$ smooth on $\mathcal{N}_1$,  $v|_{\mathcal{N}_2}=\psi$ smooth on $\mathcal{N}_2$, and $\varphi=\psi$ on $\mathcal{N}_1\cap\mathcal{N}_2$. Then there exists an open neighbourhood $\mathcal{U}$ of $\mathcal{N}_1\cap\mathcal{N}_2$, and a unique $v\in C^\infty(\mathcal{U}\cap\Omega)$ solve \eqref{wave-equ} and satisfies the boundary condition.
\end{theorem}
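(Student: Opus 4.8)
The plan is to follow Rendall's reduction of the characteristic problem to an ordinary Cauchy problem; as a fall-back one can also mimic the analytic-plus-approximation scheme already used for Theorem \ref{existence-null-timelike}. First I would install a double-null coordinate system $(s,t,x^A)$ in a neighbourhood of the corner $S=\mathcal{N}_1\cap\mathcal{N}_2$, with $\mathcal{N}_1=\{t=0\}$, $\mathcal{N}_2=\{s=0\}$, $S=\{s=t=0\}$, and with the level sets of $s$ and $t$ null, so that the principal part of $\Box_g$ takes the normal form $2g^{st}\partial_s\partial_t+g^{AB}\partial_A\partial_B+\cdots$ with $g^{st}\neq0$ — precisely the structure exploited in the proof of Theorem \ref{existence-null-timelike}.

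Next I would compute the full formal jet of the would-be solution along $\mathcal{N}_1\cup\mathcal{N}_2$. Restricting $\Box_g v+\omega v=0$ to $\mathcal{N}_2=\{s=0\}$ converts the equation into a linear first-order transport ODE along the null generators of $\mathcal{N}_2$ for the transverse derivative $\partial_s v|_{s=0}$, whose source is built from $\psi$ and its tangential derivatives and whose datum at $S$ is read off from $\partial_s\varphi$; solving it determines $\partial_s v$ on $\mathcal{N}_2$. Differentiating the equation $k$ times in $s$ and restricting again yields a transport ODE for $\partial_s^{k+1}v$ on $\mathcal{N}_2$ with source involving only quantities already computed, and the symmetric procedure on $\mathcal{N}_1$ handles the $t$-derivatives; iterating, one obtains every partial derivative of $v$ on $\mathcal{N}_1\cup\mathcal{N}_2$, consistently on the overlap $S$ thanks to the compatibility $\varphi=\psi$ there. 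By Borel's lemma (or a Whitney-type extension across the corner) I would then produce a smooth $\tilde v$ near $S$ realising this jet, so that $\tilde v|_{\mathcal{N}_1}=\varphi$, $\tilde v|_{\mathcal{N}_2}=\psi$, and $F:=-(\Box_g\tilde v+\omega\tilde v)$ vanishes to infinite order on $\mathcal{N}_1\cup\mathcal{N}_2$.

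Writing $v=\tilde v+w$, the task reduces to solving $\Box_g w+\omega w=F$ with $w$ having vanishing characteristic data. Here I would choose a spacelike hypersurface $\Sigma$ through $S$ lying to the past of both null cones inside $\Omega$, solve the ordinary Cauchy problem for $w$ with zero data on $\Sigma$ by standard linear hyperbolic theory, and restrict to the lens-shaped region bounded by $\mathcal{N}_1$ and $\mathcal{N}_2$. Because $F$ vanishes to infinite order along the null hypersurfaces, finite speed of propagation together with energy estimates of the type in Section IV (applied in the region trapped between $\Sigma$ and the null cones) forces $w$ and all its derivatives to vanish on $\mathcal{N}_1\cup\mathcal{N}_2$; hence $v=\tilde v+w$ is smooth, solves \eqref{wave-equ}, and attains the prescribed data. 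Uniqueness follows from the same energy estimate applied to the difference of two solutions, which has trivial characteristic data. Alternatively one can copy Theorem \ref{existence-null-timelike} verbatim: prove existence for real-analytic $g,\omega,\varphi,\psi$ by Cauchy--Kovalevskaya in the normal-form coordinates above, and pass to the general case by approximation using a characteristic-problem analogue of Corollary \ref{M-1-coroll}.

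The step I expect to be the main obstacle is the geometric book-keeping near $S$: arranging $\Sigma$ and the lens region so that the domain-of-dependence argument genuinely covers a full one-sided neighbourhood of $S$ in $\Omega$, and verifying that the infinite-order vanishing of $F$ really propagates into infinite-order vanishing of $w$ on each null cone (which again uses the transport-ODE structure of the restricted equation, now for $w$). By contrast the interior estimates and the Cauchy--Kovalevskaya majorant computation are routine once the normal form is in place.
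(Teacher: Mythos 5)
The paper does not actually prove this statement: it is quoted verbatim as Rendall's theorem and justified only by the citation to \cite{Rendall1990}. What you have written is, in outline, precisely Rendall's own argument (the title of that paper is ``Reduction of the characteristic initial value problem to the Cauchy problem''), so your proposal is consistent with the source the authors rely on: compute the full transverse jet on $\mathcal{N}_1\cup\mathcal{N}_2$ by solving transport ODEs along the null generators, realise the jet by a Borel/Whitney extension $\tilde v$, and reduce to an ordinary Cauchy problem for the remainder $w$.

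One step should be tightened. As written, $F=-(\Box_g\tilde v+\omega\tilde v)$ is defined on a full neighbourhood of $S$ and vanishes to infinite order only \emph{on} $\mathcal{N}_1\cup\mathcal{N}_2$, not in the past of it; if you solve the Cauchy problem with this $F$ and zero data on $\Sigma$, the source acting in the region between $\Sigma$ and the null wedge will generically make $w$ nonzero on $\mathcal{N}_1\cup\mathcal{N}_2$, and no energy estimate will rescue that. The correct (and standard) fix is to replace $F$ by $\tilde F=F\cdot\chi_{\overline\Omega}$, which is smooth exactly because of the infinite-order vanishing on $\partial\Omega$; then the solution $w$ of $\Box_g w+\omega w=\tilde F$ with trivial data on $\Sigma$ is supported in $J^+(\operatorname{supp}\tilde F)\subseteq\overline\Omega$, hence vanishes on $\mathcal{N}_1\cup\mathcal{N}_2$ by continuity, and $v=\tilde v+w$ has the prescribed data and solves the equation in $\mathcal{U}\cap\Omega$. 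With that adjustment (and the routine care you already flag about positioning $\Sigma$ so that $\mathcal{U}\cap\Omega\subset D^+(\Sigma)$), your argument is a correct proof of the cited theorem; uniqueness by the energy estimate applied to the difference of two solutions is fine.
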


We now prove the existence of solution in $\mathcal{M}_1$. We consider the spacelike slice
$$S_t=\{(\tau,z,x_2,x_3)|\arctan(\tau+\frac{1}{z})-\frac{\pi}{2}=t\}\cap \Omega_T.$$
\begin{remark}
From the prove of $H^k$ estimates in $\mathcal{M}_1$, we only need to modify the integral domain, which bounded by $S_t$, $\Sigma_T$ and the boundary of $\mathcal{M}_1$, hence we have
\be
\sum_{|\alpha|=k}||\partial^\alpha v||_{L^2(S_t)}\leq C\{||\varphi||_{H^{2k}(\Sigma_0)}+||\psi_1||_{H^{2k}(\mathcal{T})}\},
\ee
where C is a constant depending on $|g^{ij}|_{C^{2k-1}(\Omega_T)}$, $|\omega|_{C^{2k-2}(\Omega_T)}$, $k$, $T$, and $R$.
\end{remark}

\begin{theorem}[\underline{Existence of solution in $\mathcal{M}_1\cap\Omega_T$}]
For $\varphi\in H^{2k}(\Sigma_0)$, and $\psi_1\in H^{2k}(\mathcal{T})$, then, there exists $v\in H^k(\mathcal{M}_1\cap\Omega_T)$ solving $\eqref{wave-equ}$.
\end{theorem}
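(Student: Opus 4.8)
The plan is to combine the local existence theorems with the global energy estimates via a continuity/patching argument over the spacelike foliation $\{S_t\}$. First I would use Theorem~\ref{existence-null-timelike} to obtain a solution in a neighborhood of $\Sigma_0\cap\mathcal{T}$, i.e.\ in $\mathcal{J}^-(O)\cap\Omega$ for some $O\in\Omega$; this gives a starting slice $S_{t_0}$ on which $v\in H^k$ together with all the data needed to continue. Away from the corner, the region between consecutive slices $S_t$ is bounded by pieces of $\mathcal{T}$ (a timelike boundary carrying Dirichlet data $\psi_1$), by the ingoing null cone $\mathcal{H}$, and by the spacelike slices themselves; near $\mathcal{T}$ one solves a mixed null-timelike problem again by Theorem~\ref{existence-null-timelike}, while in the interior (bounded by two transverse null cones $\tau=\mathrm{const}$) one invokes Theorem~\ref{existence-chara} for the characteristic initial value problem. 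Thus local solutions exist in a neighborhood of every slice $S_t$, and by finite speed of propagation and uniqueness (which itself follows from the energy estimate applied to the difference of two solutions) these local pieces agree on overlaps and glue to a solution on $\{t\le t_0+\delta\}\cap\mathcal{M}_1\cap\Omega_T$.

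Next I would set up the continuity argument: let $t^*$ be the supremum of those $t$ for which an $H^k$ solution exists on $S_{\le t}\cap\mathcal{M}_1\cap\Omega_T$. The remark preceding the theorem gives the uniform bound $\sum_{|\alpha|=k}\|\partial^\alpha v\|_{L^2(S_t)}\le C\{\|\varphi\|_{H^{2k}(\Sigma_0)}+\|\psi_1\|_{H^{2k}(\mathcal{T})}\}$, with $C$ independent of $t$; this is exactly what prevents the $H^k$ norm from blowing up as $t\uparrow t^*$, so the solution extends continuously to $S_{t^*}$ with $v\in H^k(S_{t^*})$. On $S_{t^*}$ one then reads off the Cauchy data for a short-time continuation (again patching the near-$\mathcal{T}$ mixed problem with interior characteristic problems), which shows the existence set is open; since it is also closed and nonempty, it is all of $[t_0,\,t_{\max}]$, where $t_{\max}$ corresponds to the slice through $\Sigma_T$. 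A final routine check handles the corner region $\mathcal{J}^-(O)$ already covered by the first step, so altogether $v\in H^k(\mathcal{M}_1\cap\Omega_T)$.

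The main obstacle I anticipate is the interface between the local-existence machinery and the global energy estimate: the energy estimate of Theorem~\ref{main-theorem} (and its higher-order version, Corollary~\ref{M-1-coroll}) is stated on $\Omega_{T,0}$ with a specific boundary configuration, so one must verify that it still applies with $\Sigma_0$ replaced by a general spacelike slice $S_t$ and with the truncated domain — this is the content of the remark, but making the constant genuinely uniform in $t$ (in particular uniform up to the corner $\Sigma_0\cap\mathcal{T}$ and up to the null cone $\mathcal{H}$, where $S_t$ degenerates) requires care. A secondary technical point is regularity matching at the patching interfaces: the analytic-approximation solutions from Theorem~\ref{existence-null-timelike} live in $H^k$, the characteristic-problem solutions of Theorem~\ref{existence-chara} are $C^\infty$, and one must confirm the overlaps are consistent and the glued function retains $H^k$ regularity across the null cone $\mathcal{H}$; here the trace terms $\int_{\mathcal{H}}hE_{\mathcal{H}}[\partial^\alpha v,1,1]\,\text{d}\mathcal{H}$ controlled by the energy estimates provide the needed control on the characteristic data. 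I do not expect genuinely new ideas beyond those already in \cite{LinZhang2017}; the proof is a bookkeeping exercise combining Theorems~\ref{existence-null-timelike}, \ref{existence-chara}, and the energy bound.
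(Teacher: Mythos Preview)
Your proposal is correct and follows essentially the same approach as the paper: a continuity argument over the spacelike foliation $\{S_t\}$, using the uniform $H^k$ bound on $S_t$ to prevent blow-up and then patching the local existence results to push past the supremum $t_\star$. The paper's proof differs only in making explicit the role of the standard spacelike Cauchy problem from $S_{t_\star-\epsilon}$ into its domain of dependence $\mathcal{D}^+(S_{t_\star-\epsilon})$ and in organizing the boundary patching into three cases (I, II, III) according to whether $S_{t_\star}$ meets $\mathcal{T}$, only $\Sigma_0$, or lies near $\mathcal{H}$---exactly the geometric degenerations you flagged as the main obstacle.
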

\begin{proof}
First by Theorem \ref{existence-null-timelike}, we have local existence. If the solution does not exist in the full region $\mathcal{M}_1\cap\Omega_T$, we must have $t_\star$, such that
$$t_\star=\sup\{t:\text{the solution exists in}\ \mathcal{M}_1\cap\Omega_T\cap\cup_{l<t}S_l\}.$$
We take $\epsilon$ sufficiently small, then solution exists in $\mathcal{M}_1\cap\Omega_T\cap \cup_{l\leq t_\star-\epsilon}S_l$, and
\be
\sum_{|\alpha|=k}||\partial^\alpha v||_{L^2(S_{t_\star-\epsilon})}\leq C\{||\varphi||_{H^{2k}(\Sigma_0)}+||\psi_1||_{H^{2k}(\mathcal{T})}\},
\ee
where $C$ is not depend on $t_\star-\epsilon$.
\begin{figure}
\begin{center}
\begin{tikzpicture}
\draw[thin]
(0,3.5)--(0,0);
\node at (-0.2,3) {$\mathcal{T}$};
\draw[thin]
(0,0)--(4,4);
\node at (2.35,2) {$\Sigma_0$};
\draw[thin]
(4,4)--(4,6.5);
\node at (4.2,5.7) {$\mathcal{I}$};
\draw[thin]
(0,2)--(4,6);
\node at (-0.55,2) {$\tau=T$};
\draw[thin]
(4,4) sin (3,5.5);
\node at (2.8,5.5) {$\mathcal{H}$};
\draw[dashed]
(-0.2,1.3)--(1.6,1.3);
\node at (2.1,1.3) {$S_{t_\star}(\uppercase\expandafter{\romannumeral1})$};
\draw[thin]
(0,1.2)--(1.2,1.2);
\draw[->]
(1.2,0.8)--(0.6,1.1);
\node at (1.65,0.8) {$S_{t_\star-\epsilon}$};
\draw[thin]
(0,1.2) sin (0.3,1.5);
\draw[thin]
(1.2,1.2) sin (0.9,1.5);
\draw[->]
(-0.3,0.8)--(0.5,1.4);
\node at (-0.8,0.7) {\tiny{$\mathcal{D}^+(S_{t_\star-\epsilon})$}};
\draw[dashed]
(0.8,3)--(3.3,3);
\node at (3.85,3) {$S_{t_\star}(\uppercase\expandafter{\romannumeral2})$};
\draw[thin]
(0.9,2.9)--(2.9,2.9);
\draw[dashed]
(2.3,4.5)--(3.85,4.5);
\node at (1.6,4.5) {$S_{t_\star}(\uppercase\expandafter{\romannumeral3})$};
\end{tikzpicture}
\end{center}
\caption{}
\label{fig:1}
\end{figure}

We have three case of $t_\star$ (see figure \ref{fig:1}).  For case $I$. First, by the standard theory of linear wave equations, we have local existence for \eqref{wave-equ} in $\overline{\mathcal{D}^+(S_{t_\star-\epsilon})}$ the domain of dependence of $S_{t_\star-\epsilon}$. And then, by the local existence of null-timelike problem, we have local existence in the domain bounded by $\mathcal{T}$ and $\partial(\mathcal{D}^+(S_{t_\star-\epsilon}))$. Last, by the local existence of characteristic initial value problem, we have local existence in the region bounded by $\partial(\mathcal{D}^+(S_{t_\star-\epsilon}))$ and $\Sigma_0$. Hence, for $\epsilon$ sufficiently small, there exists $\epsilon'>\epsilon$, such that the solution exists in
$\mathcal{M}_1\cap\Omega_T\cap\cup_{l\leq t_\star-\epsilon+\epsilon'}S_l$, which yield the global existence. For case $II$, we only need the standard theory and the local existence of characteristic  initial value problem. For case $III$, by the standard theory, we can get the same result.
\end{proof}

\begin{remark}
We have the existence of solution and $H^k$-estimates in $\mathcal{M}_1\cap\Omega_T$, and by Sobolev embedding, we can extend solution to $\mathcal{H}\cap\Omega_T$. Then by the same method, we can prove the existence of solution in $\mathcal{M}_2\cap\Omega_T$, which can be considered as a null-timelike problem with initial data setting on $\mathcal{H}$ and $\mathcal{I}$.
\end{remark}
\begin{theorem}
For $\varphi\in H^{2k}(\Sigma_0)$, $\psi_1\in H^{2k}(\mathcal{T})$, and $\psi_2\in H^k(\mathcal{I})$, then there exists a unique  solution of \eqref{wave-equ}-\eqref{wave-boundary} $v\in H^k(\Omega_T)$.
\end{theorem}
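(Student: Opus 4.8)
The plan is to solve the mixed problem by splitting $\Omega_T$ along the ingoing null cone $\mathcal{H}$ into the two regions $\mathcal{M}_1\cap\Omega_T$ and $\mathcal{M}_2\cap\Omega_T$, solving in each piece, and gluing the pieces across $\mathcal{H}$; uniqueness is then read off from the global $H^k$ bound. First I would invoke the existence result in $\mathcal{M}_1\cap\Omega_T$ just proved: for $\varphi\in H^{2k}(\Sigma_0)$ and $\psi_1\in H^{2k}(\mathcal{T})$ it produces $v_1\in H^k(\mathcal{M}_1\cap\Omega_T)$ solving \eqref{wave-equ} with $v_1|_{\Sigma_0}=\varphi$ and $v_1|_{\mathcal{T}}=\psi_1$. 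By Corollary \ref{M-1-coroll} one has $\|v_1\|_{H^k(\mathcal{M}_1\cap\Omega_T)}\le C\{\|\varphi\|_{H^{2k}(\Sigma_0)}+\|\psi_1\|_{H^{2k}(\mathcal{T})}\}$, and Theorem \ref{main-theorem-high} moreover controls the weighted flux $\sum_{|\alpha|\le k}\int_{\mathcal{H}}hE_{\mathcal{H}}[\partial^{\alpha}v_1,1,1]\,\mathrm{d}\mathcal{H}$ through $\mathcal{H}$. This last quantity is precisely what appears on the right-hand side of the $\mathcal{M}_2$-estimates of Theorems \ref{M-2-main-theorem} and \ref{M-2-mian-theorem-high}; it is the right way to measure the regularity of $v_1|_{\mathcal{H}}$ (a plain Sobolev trace onto a hypersurface would cost half a derivative), so $v_1|_{\mathcal{H}}$ is admissible null data for the next stage.

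Next I would solve in $\mathcal{M}_2\cap\Omega_T$, whose boundary consists of $\mathcal{H}$, $\mathcal{I}$ and $\Sigma_T\cap\mathcal{M}_2$. This is again a null--timelike problem, now with null data $v|_{\mathcal{H}}=v_1|_{\mathcal{H}}$ and timelike data $v|_{\mathcal{I}}=\psi_2\in H^k(\mathcal{I})$ (with the tacit corner compatibility $\varphi|_{z=0}=\psi_2|_{\tau=0}$ on $S_{0,0}$, which is part of the regularity hypotheses). I would then rerun the continuation argument used in $\mathcal{M}_1\cap\Omega_T$ verbatim: local existence near $\mathcal{H}\cap\mathcal{I}$ from Theorem \ref{existence-null-timelike} (with $\mathcal{H}$ now playing the role of the null face and $\mathcal{I}$ that of the timelike face), propagation by the classical theory of linear hyperbolic equations and by the characteristic local existence Theorem \ref{existence-chara}, and an open--closed argument on a spacelike foliation $\{S_t\}$ that closes because Theorem \ref{M-2-mian-theorem-high} provides an $H^k$ bound on each slice $S_t$ uniform in $t$ (depending only on $g$, $\omega$, $\Lambda$, $k$, $T$, $R$, on $\|v_1\|_{H^k}$ and on $\|\psi_2\|_{H^k(\mathcal{I})}$). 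This yields $v_2\in H^k(\mathcal{M}_2\cap\Omega_T)$ solving \eqref{wave-equ} with $v_2|_{\mathcal{I}}=\psi_2$ and $v_2|_{\mathcal{H}}=v_1|_{\mathcal{H}}$.

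Finally I would glue and conclude. Set $v=v_1$ on $\mathcal{M}_1\cap\Omega_T$ and $v=v_2$ on $\mathcal{M}_2\cap\Omega_T$. Since $\mathcal{H}$ is characteristic, the $Q$-flux through $\mathcal{H}$ from either side involves only derivatives tangential to $\mathcal{H}$ (the earlier Remark on null hypersurfaces), and these agree because $v_1|_{\mathcal{H}}=v_2|_{\mathcal{H}}$; hence the concatenation satisfies \eqref{wave-equ} weakly across $\mathcal{H}$, thus on all of $\Omega_T$, and it satisfies \eqref{wave-boundary}. Moreover, on the characteristic $\mathcal{H}$ the transversal ($N_1$-)derivative of any solution is determined by the intrinsic data through a transport equation along the generators with initial value fixed at $S_{0,0}$, so $v_1$ and $v_2$ agree to all orders on $\mathcal{H}$; no singular layer is created and $v\in H^k(\Omega_T)$. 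Uniqueness is then immediate from linearity and the combined $H^k$ estimate (the corollary to Theorems \ref{main-theorem-high} and \ref{M-2-mian-theorem-high}): the difference of two $H^k$ solutions has vanishing data, hence vanishing $H^k$ norm (already the $H^1$ estimate suffices for this).

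I expect the genuinely delicate point to be the gluing step: one must verify that $v_1|_{\mathcal{H}}$, measured by the weighted energy $E_{\mathcal{H}}$ rather than by a naive trace, is regular enough to drive the $\mathcal{M}_2$ problem without loss of derivatives, and that the two local solutions match across the characteristic surface $\mathcal{H}$ — together with the compatibility at the triple corner $S_{0,0}$ — to exactly the order needed for the glued function to be an honest $H^k$ weak solution of \eqref{wave-equ} and not merely a piecewise one carrying a jump in a normal derivative. Everything else reduces to applying the energy estimates and local existence theorems already in hand.
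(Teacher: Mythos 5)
Your proposal follows essentially the same route as the paper: decompose $\Omega_T$ along the ingoing null cone $\mathcal{H}$, solve first in $\mathcal{M}_1\cap\Omega_T$ from $(\varphi,\psi_1)$, restrict the solution to $\mathcal{H}$ (the paper invokes Sobolev embedding, you invoke the flux $E_{\mathcal{H}}$ controlled by Theorem \ref{main-theorem-high}, which is what actually feeds Theorems \ref{M-2-main-theorem} and \ref{M-2-mian-theorem-high}), then treat $\mathcal{M}_2\cap\Omega_T$ as a second null--timelike problem with data on $\mathcal{H}$ and $\mathcal{I}$, and deduce uniqueness from the combined $H^k$ estimate. Your additional care about the characteristic gluing and the corner compatibility at $S_{0,0}$ is more explicit than the paper's one-line remark, but the argument is the same.
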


\section{Discussion}
In this paper, we consider the null-time-like boundary value problem of linear wave equations in general asymptotic AdS space-time. We show that the solution is globally existent and unique if the data is give on the time-like, null and conformal boundary. Further more, we also found similar result also exist for Maxwell field in an asymptotic AdS space-time and the paper is under preparing. These two cases are toy models from the holographic condense matter theory. In fact, the standard holographic model for condense matter theory contains scalar field, Maxwell field and gravity field\cite{3H08}, so we need to consider the coupled system of scalar field, Maxwell field and linearized gravity. This is a very important and interesting mathematical problem and will be considered in future work.

\section*{Acknowledgement}
This work is supported by the Natural Science Foundation of China
(NSFC) under Grant Nos. 11575286 and 11731001.

\end{document}